\documentclass[final,onefignum,onetabnum]{siamsials250905}
\newsiamremark{remark}{Remark}

\usepackage{enumitem}

\usepackage[utf8]{inputenc}
\usepackage{amscd,amsfonts,amsopn,amssymb,mathtools}
\usepackage{enumitem}
\usepackage{tikz}
\usepackage{textcomp}
\usepackage{float}
\usepackage{subcaption}
\usepackage{multicol}
\usepackage{mathrsfs}
\usepackage{quiver}
\usepackage[numbers,sort]{natbib}

\usepackage{bbm}

\usepackage{import}

\usepackage{subfiles}

\newcommand{\Z}[0]{\mathbbm{Z}}
\newcommand{\R}[0]{\mathbbm{R}}

\title{A Persistent Homology Pipeline for the Analysis of Neural Spike Train Data}
\author{
Cagatay Ayhan\footnote{Florida State University, Department of Mathematics} \and
Audrey N. Nash\footnotemark[1] \and
Roberto Vincis\footnote{Florida State University, Department of Biological Science, Programs in Neuroscience and Molecular Biophysics} \and
Martin Bauer\footnotemark[1] \and
Richard Bertram\footnote{Florida State University, Department of Mathematics and Programs in Neuroscience and Molecular Biophysics} \and
Tom Needham\footnotemark[1]
}

\date{}

\begin{document}
\maketitle

\begin{abstract}
In this article, we introduce a Topological Data Analysis (TDA) pipeline for neural spike train data. Our framework treats a collection of spike trains recorded from a population of neurons as a metric space, endowed with the Victor-Purpura (VP) distance, to which techniques of  persistent homology are applied. This TDA framework proves capable of identifying stimulus-discriminative structure in simultaneous spike train recordings, even in cases where the discriminating ability of individual neurons is low.\\
\noindent {\bf Relevance to Life Sciences:}
Understanding how the brain transforms sensory information into perception and behavior requires analyzing coordinated neural population activity. Modern large-scale electrophysiology enables simultaneous recording of high-dimensional spike train ensembles, but extracting meaningful information from these datasets remains a central challenge in neuroscience. A fundamental question is how ensembles of neurons discriminate between different stimuli or behavioral states, particularly when many individual neurons exhibit weak or no stimulus selectivity, yet their coordinated activity may still contribute critically to network-level encoding. Here we describe a topological data analysis framework that identifies stimulus-discriminative structure in spike train ensembles recorded from the mouse insular cortex during presentation of deionized water stimuli at distinct non-nociceptive temperatures. We show that population-level topological signatures effectively differentiate oral thermal stimuli even when individual neurons provide little or no discrimination. These findings demonstrate that ensemble organization can carry perceptually relevant information that standard single-unit analysis may miss, contributing to a broader understanding of how network dynamics, rather than isolated feature tuning, support sensory discrimination. Because the framework operates on general point process data, it extends beyond neuroscience to other domains including gene expression timing and financial market dynamics.\\
\noindent {\bf Mathematical Content:}
The framework builds on a mathematical representation of spike train ensembles that enables persistent homology to be applied to collections of unregistered point processes. At its core is the VP distance, which quantifies spike train similarity through costs of spike insertion, deletion, and temporal shifting. Using this metric, we construct persistence-based descriptors that capture multiscale topological features of ensemble geometry. Two key theoretical results support the method: a stability theorem establishing robustness of persistent homology to perturbations in the VP metric parameter, and a probabilistic stability theorem ensuring robustness of topological signatures. This combination of mathematical rigor and methodological flexibility provides a solid foundation for population-level analysis through the lens of topology.
\end{abstract}

\section{Introduction}
The brain's ability to transform sensory inputs into perceptions and guide behavior relies on neural population activity. While individual neurons encode specific features, it is the coordinated activity of neuronal ensembles that encodes perception, cognition, and behavior~\cite{hebb1949organization, buzsaki2004large, yuste2024neuronal}. Modern recording technologies now enable simultaneous measurement of large neuronal ensembles in behaving animals~\cite{Trautmann2025NeuropixelsNHP, Chung2022HumanNeuropixels, Churchland2012Dynamics, Stringer2019HighDimensional}, providing unprecedented access to population-level neural dynamics. Yet extracting meaningful information from these high-dimensional datasets, which often contain heterogeneous and sparse neural responses, remains a central challenge in neuroscience~\cite{Panzeri2015, Yuste2015, Averbeck2006,barth2012experimental, okun2015diverse, hromadka2008sparse}.

A fundamental goal of \emph{in-vivo} neural ensemble analysis is to determine how ensembles of simultaneously recorded neurons discriminate between different stimuli or behavioral states. This presents a significant challenge: many neurons in an ensemble may not exhibit strong individual selectivity to particular stimuli, yet their coordinated activity may still contribute critically to population-level encoding. Here we describe a persistent homology  framework that identifies stimulus-discriminative structure in spike train ensembles, even when responses are distributed across heterogeneous neuronal populations with varying levels of individual selectivity. While we demonstrate this approach using recordings from the mouse insular cortex, the method is broadly applicable to other brain regions and extends naturally to other biological and non-biological point process data such as gene expression timing~\cite{Fromion13,Palande2023, Jethava2011}, or the timing of trades in financial markets~\cite{Engle03}.

Topological Data Analysis (TDA) comprises a collection of techniques, rooted in metric geometry and computational topology, for quantifying the  structure of complex, high-dimensional data. A central tool in TDA is persistent homology, which tracks the birth and death of connected components, loops, and higher dimensional features across scales of a parameter~\cite{edelsbrunner2002topological, ZomorodianCarlsson2005, Carlsson2009}, yielding multiscale descriptors of the dataset that are robust to noise. In neuroscience, TDA has been applied to population activity to reveal nontrivial topological signatures in sensory cortex data~\cite{Singh2008TopologicalAnalysis, giusti2015clique, Guidolin2022-tn}, to classify network dynamical regimes in simulated and recorded spiking systems~\cite{Bardin2018TopologicalExploration},  and more broadly to characterize biological structure across scales~\cite{Amzquita2020,curto2025topological}. In this work, we adapt TDA methodology to ensembles of simultaneously recorded unregistered point process, such as neural spike trains, by introducing a new variation of the mathematical description of these objects. This formulation allows us to obtain several theoretical results that have not been addressed previously, and it provides the foundation for applying a TDA framework to datasets of simultaneously recorded neurons. We demonstrate the utility of the TDA pipeline by using simultaneous recordings from the mouse insular gustatory cortex (GC), the primary sensory cortex for taste~\cite{spector2005representation, vincis2019central} that also represent fluid thermal information from the oral cavity~\cite{Bouaichi2023-ol, Nash2025-ot}. In particular, we show that this approach effectively distinguishes between different neural response patterns to deionized water stimulus presented at different non-nociceptive temperatures. 


Simultaneously recorded spike trains contain rich geometric and topological structure, and applying tools from TDA reveals organizational patterns that standard approaches are not equipped to detect. Central to our framework is the Victor-Purpura (VP) distance~\cite{VictorPurpura1996}, a widely used and biologically meaningful metric for comparing spike trains. Building on this foundation, we establish new stability results that ensure robustness of persistence-based representations to perturbations in both the VP metric parameters (Theorem~\ref{thm: VP stability Theorem}) and in a novel probabilistic neuronal model designed to capture spike train variability (Theorem~\ref{thm:stability_prob_measure}). These results provide rigorous guarantees that support the reliability and interpretability of our method. Although our focus here is on neuronal population activity, the framework is broadly applicable to any data modality consisting of simultaneously recorded point processes. Moreover, except for the VP-specific stability theorem, our methodological contributions place no assumptions on the underlying metric, giving the approach substantial flexibility for use in diverse scientific domains. Together, these results position our framework as a general and theoretically grounded tool for analyzing population-level activity through the lens of topological data analysis 

The paper is organized as follows. Section \ref{sec:framework} describes our overall framework. It begins  with a mathematical characterization of the spike train data that is the target of the analysis pipeline. We then describe the metric used to characterize the spike train ensembles, the Victor-Purpura distance, as well as the persistent homology pipeline for analyzing ensembles of spike trains. Theoretical results on this framework are provided in Section \ref{sec:results}, focusing on Lipschitz stability results, both with respect to changes in the Victor-Purpura distance parameter and changes in the probability distribution used to mathematically model neural responses. Finally, experimental results provided in Section \ref{sec:experimental results and applications to real data}  illustrate the utility of this pipeline, using both synthetic and biological data. Our synthetic experiments highlight an important strength of TDA-based analysis: even when individual neurons carry no discriminative information, population-wide structure can provide strong signals for classification of responses to stimuli. Next, in biological datasets recorded in the GC from awake, behaving mice, we demonstrate that ensemble-level classification using our topological pipeline often outperforms single-neuron analyses.  We conclude in Section \ref{sec:discussion} with a discussion of the broader implications of this approach, both for the mathematical methodology and for its potential to advance biological understanding of neural population activity.

\section{A Persistent Homology Approach for Population-Level Spike Train Analysis}\label{sec:framework}
In this section we provide precise mathematical descriptions of neural spike train populations, which could apply more generally to point processes arising from other applications. We also describe the persistent homology pipeline that we use for analysis of the data. The concepts that we describe here are fundamental in neuroscience, but we give self-contained mathematical formalism for making these concepts precise.  

\subsection{A Mathematical Representation of Spike Train Data}
A neural spike train is a temporal representation of neural activity, which can be represented as a finite set of time-stamps indicating when the neuron emits an action potential (also called a \emph{spike}). Neurons exhibit distinct firing patterns, and their spike trains may vary in response to different stimuli. Motivated by this biological phenomenon, we formally define the concepts of \emph{spike trains} and \emph{neurons} as distinct mathematical entities.

\begin{definition}
    Let $X =\{0,1,\dots,T\}$, with $T \in \mathbb{Z}_{>0}$ be a \emph{discrete time domain}. A \textbf{spike train} $\mathsf{S}$ over $X$ is a finite set of \textbf{spike times} $t_i \in X$,
    \[
    \mathsf{S} =  \{ t_1,t_2,\dots,t_n\}.
    \]
    We use $\mathscr{S}_X$ to denote the set of all possible spike trains over $X$.
\end{definition}

\begin{remark}

Spike trains are frequently modeled mathematically as realizations of a point process over a (continuous) interval of time~\cite{kass2005statistical,truccolo2005point}. Here, we instead consider the time domain as discrete, which is appropriate for uniform time sampling, as is typical in neural recordings. Indeed, any finite collection of spike trains recorded over a continuous time window can be normalized so that spike times correspond to integers. 
\end{remark}

A fundamental difficulty with interpreting neural spike train data is that individual neurons exhibit variable responses to identical stimuli across repeated trials. This trial-to-trial variability in spike timing and rate reflects the stochastic nature of neural computation and motivates our probabilistic approach in the following definitions. From now on, we fix the discrete time domain $X = \{0,1,\dots,T\}$. 

\begin{definition}
    A \textbf{neuron} is a function $\mathbf{n}:\mathcal{I} \to \mathbb{P}(\mathscr{S}_X)$ where $\mathcal{I}=\{s_1,s_2,\dots,s_m\}$ is a finite set of \textbf{stimuli}, and $\mathbb{P}(\mathscr{S}_X)$ denotes the set of probability measures over $\mathscr{S}_X$. 
\end{definition}

\begin{remark}\label{rem:conditional_probability}
    A typical mathematical formalism for representing neurons is as a conditional probability distribution of the form $P(\mbox{response}\, |\, \mbox{stimulus})$, where the response may be a spike train or an associated statistic such as firing rate~\cite{meyer2017models,dayan2005theoretical}. Since our stimulus space $\mathcal{I}$ is assumed to be finite (an assumption which is essentially forced in any experimental setup), this conditional probability perspective is equivalent to the function definition provided above. 
\end{remark}

We are interested in the experimental setting where signals from multiple neurons are recorded for each stimulus, which is captured in the following definition.

\begin{definition}
    A \textbf{train ensemble} or \textbf{raster} of size $k \in \Z_{>0}$ is a $k$-tuple $\mathcal{R}$ of spike trains. That is, $\mathcal{R}$ is an element of the product space 
    \[
    \mathscr{S}^k_X  \coloneqq \underbrace{\mathscr{S}_X \times \cdots \times \mathscr{S}_X}_{k\text{ times}}.
    \]
    A \textbf{(neuronal) network} of size $k \in \Z_{>0}$ is a function
    \[
    \mathbf{N}:\mathcal{I} \to \mathbb{P}(\mathscr{S}_X^k),
    \]
    where $\mathcal{I}=\{s_1,s_2,\dots,s_m\}$ once again denotes a finite set of stimuli.
\end{definition}

\begin{remark}
    Neuronal networks, also referred to as \emph{neuronal populations}, are modeled as conditional distributions in the literature~\cite{paninski2007statistical,ganmor2015thesaurus}. As was similarly observed in Remark~\ref{rem:conditional_probability}, this is equivalent to our definition, as the stimulus space is assumed to be finite. 
\end{remark}

\begin{remark}\label{rem:train_ensemble_convention}
    As defined, a train ensemble $\mathcal{R} \in \mathscr{S}_X^k$ is an ordered $k$-tuple of spike trains. Below, it will be convenient to forget the ordering and consider $\mathcal{R}$ as a multiset of spike trains, as we will then endow this multiset with a pseudometric---recall that a \textbf{pseudometric} is a function satisfying the axioms of a metric, except that unequal points can receive distance zero. For the rest of the paper, we abuse notation and consider $\mathcal{R}$ as a multiset when convenient, where this interpretation will always be clear from context. Moreover, for the sake of convenience, we make the mild assumption that spike trains are not repeated in the rasters, so that $\mathcal{R}$ is always a set (rather than a multiset). 
\end{remark}

From a neuronal network $\mathbf{N}$, one obtains a $k$-tuple of neurons $(\mathbf{n}_1,\ldots,\mathbf{n}_k)$, via the associated marginal distributions for each stimulus. In practice, we have access to samples from these marginal distributions, each of which can alternatively be considered as a sample of a raster. This is illustrated schematically in Fig.~\ref{fig:Dataset Figure}.

\begin{figure}[htbp]
    \centering
    \includegraphics[width=1\linewidth]{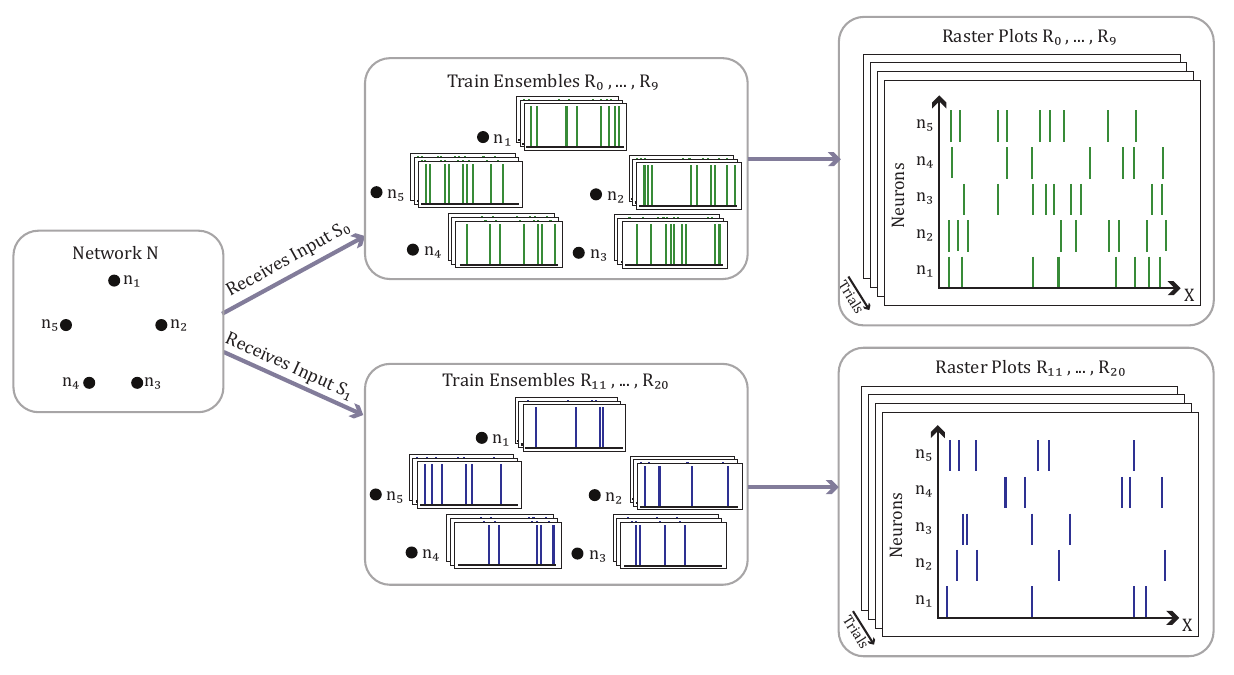}
    \caption{Example dataset under consideration. The network $\mathbf{N}$ consists of five neurons, $\mathbf{n}_1,\mathbf{n}_2,\mathbf{n}_3,\mathbf{n}_4,\mathbf{n}_5$. There are two distinct stimuli, $s_0$ and $s_1$, each presented to the network $10$ times. Each exposure results in a train ensemble (a single spike train for each neuron in the network, in this case a set of five spike trains). This is depicted in two distinct ways: (1) the resulting trains are drawn next to each neuron, and (2) the \textbf{raster plots} shown on the right, where the time domain $X$ is on the $x$-axis and the $y$-axis represents neurons with their corresponding trains. }
    \label{fig:Dataset Figure}
\end{figure}

\subsection{The Victor-Purpura Distance}
In this study, a single data point is a train ensemble $\mathcal{R} \in \mathscr{S}_X^k$ sampled from a fixed network $\mathbf{N}$ in response to a given stimulus. Our objective is to study the network's behavior from a topological perspective, which requires a notion of similarity between the spike trains of the ensemble $\mathcal{R}$. To this end, we employ the \emph{Victor-Purpura (VP) distance}~\cite{VictorPurpura1996} on the space of spike trains. The VP distance between spike trains $\mathsf{S}$ and $\mathsf{S}'$ was originally defined as an optimization problem: it is the lowest \emph{total cost} required to convert $\mathsf{S}$ to $\mathsf{S}'$ via a sequence of moves consisting of either erasing or adding spikes (each of these modifications incurs a cost of 1), or adjusting spike timings (each adjustment incurs a cost of $q$ times the distance moved, where $q \geq 0$ is a hyperparameter). We present an alternative definition of this distance using the terminology of partial bijections, borrowed from the TDA literature.  This equivalent definition of the VP distance is better suited for obtaining the theoretical results of this article.

\begin{definition} \label{definition: VP distance}
    Let $\mathsf{S} =\{t_1,t_2,\dots,t_n\}$ and $\mathsf{S}' =\{t'_1,t'_2,\dots,t'_m\}$ be two spike trains, and let $q \geq 0$ be a fixed cost parameter. A \textbf{partial bijection} from $\mathsf{S}$ to $\mathsf{S}'$, denoted $\varphi:\mathsf{S} \rightharpoonup\mathsf{S}'$, consists of a subset $\mathrm{dom}(\varphi) \subset \mathsf{S}$ called the \textbf{domain}, a subset $\mathrm{codom}(\varphi) \subset \mathsf{S}'$ called the \textbf{codomain}, and a bijective function $\varphi:\mathrm{dom}(\varphi) \to \mathrm{codom}(\varphi)$. The \textbf{$q$-cost} of a partial bijection $\varphi:\mathsf{S} \rightharpoonup\mathsf{S}'$ is given by 
    \[
\mathrm{cost}_q(\varphi) \coloneqq | \mathsf{S} \setminus \mathrm{dom}(\varphi)| + | \mathsf{S}' \setminus \mathrm{codom}(\varphi)| + q \sum_{t \in \mathrm{dom}(\varphi)}|t - \varphi(t)|,
\]
where $|\mathsf{A}|$ is used to denote the cardinality of a set $\mathsf{A}$. Finally, the \textbf{$q$-Victor-Purpura (VP) distance} between $\mathsf{S}$ and $\mathsf{S}'$ is defined by
    \begin{equation*}
        \mathrm{VP}_{q}(\mathsf{S},\mathsf{S}') \coloneqq \min_{\varphi:\mathsf{S} \rightharpoonup\mathsf{S}'} \mathrm{cost}_q(\varphi),
        \end{equation*}
where the minimum is taken over all partial bijections.    
\end{definition}

The first two terms in the cost function count the spikes that are erased in $\mathsf{S}$ and $\mathsf{S}'$, respectively, while the last term accounts for the total cost of shifting the matched spikes. Furthermore, we will 
refer to a partial bijection $\varphi:\mathsf{S} \rightharpoonup \mathsf{S}'$ as an \textbf{optimal partial bijection} if it realizes the distance.

\begin{remark} \label{remark: VP distance}
Note that although our formulation differs in presentation from that of~\cite{VictorPurpura1996}, the two can be readily shown to be equivalent.
Furthermore, it has been shown that the VP distance defines a metric on the set of spike trains $\mathscr{S}_X$~\cite{VictorPurpura1996}, i.e, 
it is positive-definite, symmetric and satisfies the triangle inequality (although, it is only a pseudometric when $q=0$, as is clear from one of the comments below). Finally, as observed in~\cite{VictorPurpura1996}, the VP distance admits specific expressions in the two extreme cases of the parameter~$q$:
\end{remark}
        \begin{enumerate}
            \item For $q = 0$, spikes may be shifted free of cost, so the VP distance depends only on the difference in the total numbers of spikes, i.e., $\mathrm{VP}_q(\mathsf{S},\mathsf{S}') = n-m$.
            \item For $q >2$, shifting a spike costs more than erasing both of the spikes. In this case, the distance formula is given by the total number of spikes (in both trains) that do not happen at the exact same time, i.e., $\mathrm{VP}_q(\mathsf{S},\mathsf{S}') =  |\mathsf{S}'\setminus \mathsf{S}| + |\mathsf{S}\setminus \mathsf{S}'|$.
        \end{enumerate}
        
\begin{figure}
    \centering
    \includegraphics[width=0.8\linewidth]{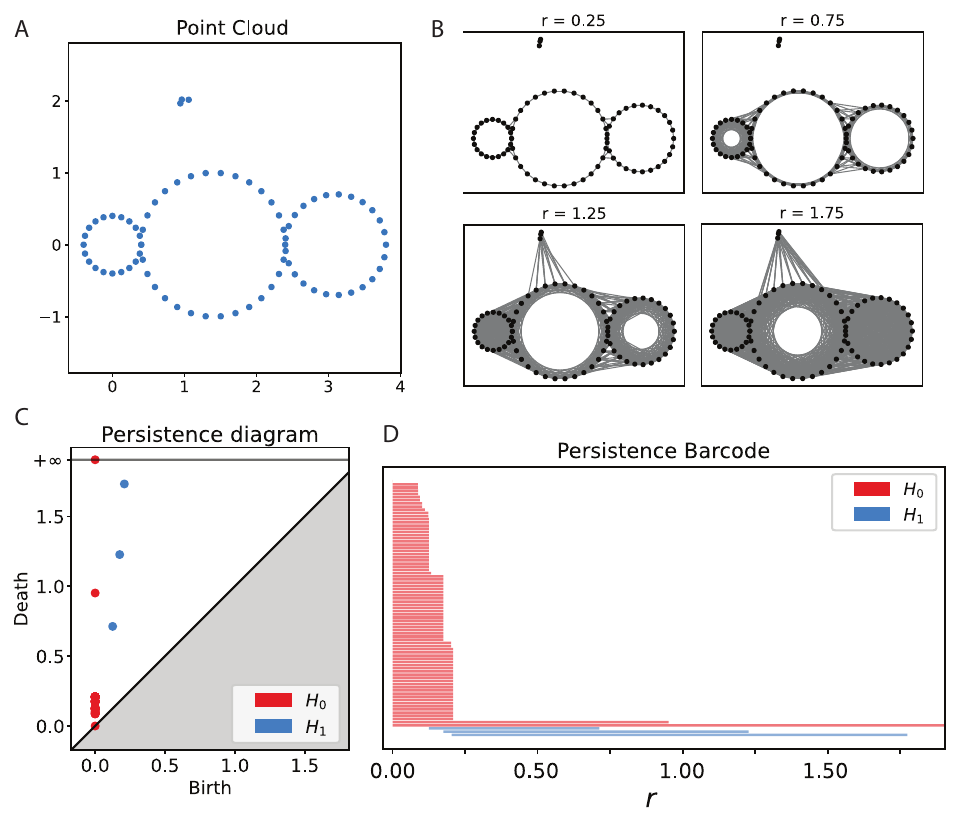}
    \caption{Illustration of persistent homology applied to a point cloud. \textbf{(A)} A point cloud in $\mathbb{R}^2$---a finite metric space $(X, d_X)$. The data appears to form three circular shapes as well as a separate cluster of three points above the main structure. \textbf{(B)} A nested sequence of Vietoris–Rips complexes $\mathbf{VR}_{r}(X)$ is shown at increasing scales $r$. As $r$ grows, more simplices are added. \textbf{(C)} The persistence diagram computed from the Vietoris–Rips filtration, where each point $(b_i, d_i)$ represents the birth and death of a topological feature. Red points correspond to $0$-dimensional features ($H_0$: connected components), and blue points represent $1$-dimensional features ($H_1$: loops). The two red points far from the diagonal suggest the presence of two connected components, while the three blue points reflect the existence of three loops in the data. \textbf{(D)} The associated persistence barcode visualizes the same features with horizontal bars indicating their lifespan. Red bars vanish as components merge, while long blue bars indicate circular features corresponding to central holes in the data that persist across a long range of parameter values, eventually disappearing near $r = 1.75$.}
    \label{fig:persistent homology example}
\end{figure}

\subsection{A Persistent Homology Pipeline for Comparing Ensembles of Spike Trains}
\label{section: pipeline}
We now regard a train ensemble $\mathcal{R} \in \mathscr{S}_X^k$ as a metric space $(\mathcal{R},\mathrm{VP}_{q})$ equipped with the VP distance described above (here, we use the convention described in Remark~\ref{rem:train_ensemble_convention}). To study the topological structure of $\mathcal{R}$, we employ \emph{persistent homology}~\cite{edelsbrunner2002topological, ZomorodianCarlsson2005, Carlsson2009}, a central tool in TDA, which we briefly review below; for detailed treatments, we refer the reader to the classic work by Carlsson~\cite{carlsson2014topological} and Edelsbrunner and Harer~\cite{EdelsbrunnerHarer2010}.

\subsubsection{Persistent Homology}
In a standard application of persistent homology, one starts with a finite metric space $(X,d_X)$ and constructs simplicial complexes at increasing scales; e.g., Vietoris-Rips complexes. This gives a nested sequence of simplicial complexes also known as a \textbf{filtration} or a \textbf{filtered simplicial complex}, which serves as an input for \textbf{persistent homology}. As the scale increases, homological features---such as clusters, loops, and voids--- appear and disappear. Persistent homology records the \textbf{birth} and \textbf{death} of these features across scales and summarizes their lifespans in a representation known as a \textbf{persistence barcode}, or equivalently a \textbf{persistence diagram}~\cite{ZomorodianCarlsson2005,carlsson2007theory}. Structurally, a persistence diagram is a (say, finite) multiset of points of the form $(b,d) \in \R \times (\R \cup \{+\infty\})$, such that $b \leq d$, where $b$ encodes the birth time of a feature and $d$ encodes the death time. This is a multiscale summary of the dataset $(X,d_X)$, where longer-lived features (i.e., with $d \gg b$) are often interpreted as signal, while short-lived ones may correspond to noise; see Fig.~\ref{fig:persistent homology example} for an illustration.  To further demonstrate the behavior of this framework we also provide an interactive exploration at \href{https://ayhncgty.github.io/Visual-TDA/docs/PH_demos/persistent_barcode_animations.html}{Persistent Homology Demonstrations}.\footnote{\url{https://ayhncgty.github.io/Visual-TDA/docs/PH_demos/persistent_barcode_animations.html}} Notably, the weighted graph example is especially relevant, as it reflects the type of data we work with—namely, weighted graphs in which nodes represent spike trains and edge weights are given by the Victor–Purpura distance.

Persistence diagrams are compared via a canonical metric known as the \textbf{bottleneck distance}~\cite{frosini2001size,Cohen-Steiner2007}, denoted $d_B$. Given two persistence diagrams $\mathcal{B}$ and $\mathcal{B}'$, this is an extended metric (i.e., it is allowed to take the value $+\infty$), defined as 
\begin{equation}\label{eqn:bottleneck_distance}
\begin{split}
d_B(\mathcal{B},\mathcal{B}') &\coloneqq \min_{\varphi:\mathcal{B} \rightharpoonup \mathcal{B}'} \max\left\{\max_{(b,d) \in \mathrm{dom}(\varphi)} \|(b,d) - \varphi((b,d))\|_\infty, \right. \\
&\qquad \qquad \qquad \qquad \left. \max_{(b,d) \in \mathcal{B} \setminus \mathrm{dom}(\varphi)} \frac{d-b}{2}, \max_{(b',d') \in \mathcal{B}' \setminus \mathrm{codom}(\varphi)} \frac{d'-b'}{2} \right\},
\end{split}
\end{equation}
where the minimum is over partial bijections (Definition~\ref{definition: VP distance}), and $\|\cdot\|_\infty$ denotes the usual $\ell_\infty$-norm, extended so that $\|(b,d)-(b',d')\|_\infty \coloneqq +\infty$ if and only if exactly one of the death times $d,d'$ is equal to $+\infty$. We observe that the structure of the bottleneck distance is similar to that of the VP distance, as formulated in Definition~\ref{definition: VP distance}.

\subsubsection{Description of the Persistent Homology Pipeline} 

\begin{figure}[htbp]
    \centering
    \includegraphics[width=1.0\linewidth]{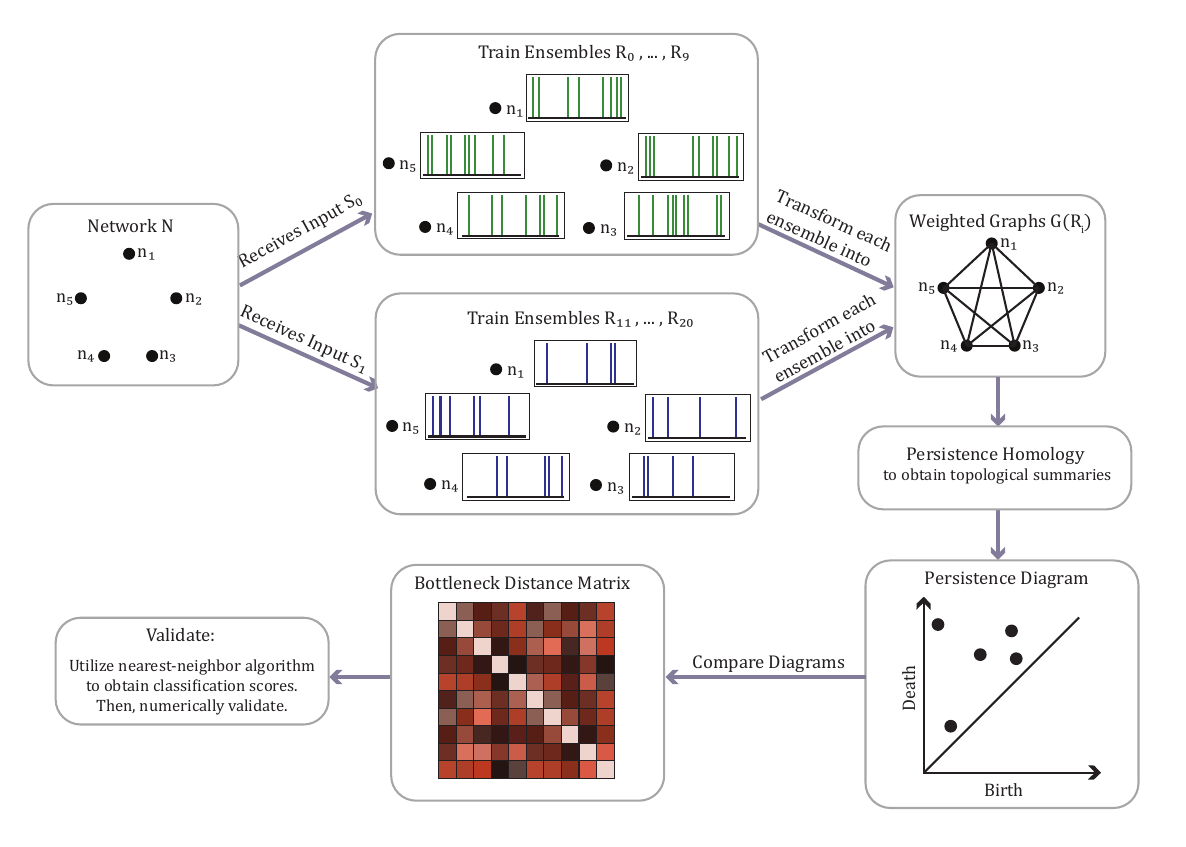}
    \caption{Pipeline schematics. The process begins with a network $\mathbf{N}$, to which different stimuli are presented multiple times. In this schematic, we consider two stimuli, $s_0$ and $s_1$, though the framework can be generalized to any finite number of stimuli. Each presentation of a stimulus yields a train ensemble $\mathcal{R}_i$, which is then converted into a weighted graph $G(\mathcal{R}_i)$. The figure shows one representative example of such graphs. All weighted graphs are transformed into a filtered simplicial complex $\mathbf{VR}_{*}$ using the Vietoris-Rips construction. By computing persistent homology, we obtain a topological summary in the form of a persistence diagram, which serves as a topological representation of the train ensembles. To analyze how much information from the network dynamics is encoded in these diagrams, we compare them using the bottleneck distance. 
    To numerically validate these findings, LOOCV is performed to obtain a classification score.}
    \label{fig:Main Pipeline}
\end{figure}

\label{subsubsection: Description of the Proposed TDA Pipeline}
TDA has been applied in several neuroscience studies, including~\cite{gardner2022toroidal,giusti2015clique,dabaghian2012topological}. In this section, we introduce a TDA-based pipeline to analyze train ensembles. Our motivation for applying TDA to this type of data stems from the intrinsic topological structure present in train ensembles that are not easily captured by traditional statistical methods.

To describe our method, we begin with a dataset of train ensembles $\mathcal{R}_i \in \mathscr{S}_X^k$ which are sampled from a fixed network $\mathbf{N}:\mathcal{I} \to \mathbb{P}(\mathscr{S}_X^k)$ under various stimuli $\mathcal{I}$. Each $\mathcal{R}_i$ is labeled according to its corresponding stimulus. Our pipeline consists of two steps:
\begin{enumerate}
    \item \textbf{Compute (fixed degree-$k$) persistent homology of each $\mathcal{R}_i$} by treating them as metric spaces $(\mathcal{R}_i, \mathrm{VP}_q)$; or as a weighted graph where nodes represent spike trains in $\mathcal{R}_i$, and edges are weighted by the VP distance. In this way, each train ensemble is represented by a persistence barcode, denoted by $\mathcal{B}_k(\mathcal{R},\mathrm{VP}_q)$, that encodes the topological structure of the neural activity. For example, for degree-0 persistent homology, the resulting barcode reflects the clustering structure of the spike trains within $\mathcal{R}_i$. Degree-1 features, on the other hand, may indicate recurrent or cyclic firing patterns—situations where the activity of one neuron triggers another in a feedback sequence that may eventually loop back to the first neuron. Higher-dimensional features may also appear, although their biological interpretation becomes increasingly speculative.
    \item \textbf{Compare persistence barcodes}: Step 1 yields a collection of persistence barcodes, each corresponding to a train ensemble. To assess the similarity between ensembles, we compare their barcodes using the bottleneck distance \eqref{eqn:bottleneck_distance}. The resulting pairwise distances are assembled into a \emph{bottleneck distance matrix}, which provides a quantitative measure of how different the underlying topological structures are. 
\end{enumerate}

We refer the reader to Fig.~\ref{fig:Main Pipeline} for an illustration of the overall pipeline. In the next section, we address several methodological questions of this approach. Subsequently, in Section~\ref{sec:experimental results and applications to real data}, we apply the method to synthetic and  biological data.

\section{Methodological Results}\label{sec:results}

This section is concerned with theoretical properties of the analysis pipeline described above. We focus on \emph{stability} results, which provide guarantees that our pipeline is not extremely sensitive to hyperparameter choice or changes in the data. First, we show that the framework is stable to changes in the $q$-parameter in the Victor-Purpura distance. We then prove a general result which says that small changes to the neuronal network distributions yield small changes in the distributions of barcodes. 

\subsection{Stability Under Victor-Purpura Metric Variations}
Let $\mathcal{R}$ be a fixed train ensemble, equipped with the VP distance $\mathrm{VP}_q$ for some choice of $q \geq 0$. It is natural to ask whether the pipeline described in Section~\ref{subsubsection: Description of the Proposed TDA Pipeline} is sensitive to the choice of hyperparameter $q$. Our main theorem for this subsection shows that the resulting barcode $\mathcal{B}_k(\mathcal{R},\mathrm{VP}_q)$, varies Lipschitz continuously with respect to $q$. 

\begin{theorem}\label{thm: VP stability Theorem}
    Let $\mathcal{R} \in \mathscr{S}_X^k$ be a train ensemble. Then, for any $q,q' \geq 0$, we have
   \[
        d_B\big(\mathcal{B}_k(\mathcal{R},\mathrm{VP}_{q}),\mathcal{B}_k(\mathcal{R},\mathrm{VP}_{q'})\big) \leq \frac{T(T+1)}{2}|q-q'|.
    \]
That is, the persistent homology map $q \mapsto \mathcal{B}_k(\mathcal{R},\mathrm{VP}_{q})$ is Lipschitz.
\end{theorem}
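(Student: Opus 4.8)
The plan is to reduce the assertion to a pointwise comparison of the metrics $\mathrm{VP}_q$ and $\mathrm{VP}_{q'}$ on the finite vertex set $\mathcal{R}$, and then to establish that comparison by a combinatorial argument about optimal partial bijections. Since the two barcodes $\mathcal{B}_k(\mathcal{R},\mathrm{VP}_q)$ and $\mathcal{B}_k(\mathcal{R},\mathrm{VP}_{q'})$ are computed from Vietoris-Rips filtrations on the \emph{same} underlying set $\mathcal{R}$ --- only the edge weights differ --- the first step is to invoke the standard bottleneck stability theorem for persistent homology~\cite{Cohen-Steiner2007}: writing $f$ and $g$ for the two Vietoris-Rips filtration functions on the full simplex with vertex set $\mathcal{R}$, one has $d_B(\mathcal{B}_k(\mathcal{R},\mathrm{VP}_q),\mathcal{B}_k(\mathcal{R},\mathrm{VP}_{q'})) \le \|f-g\|_\infty$. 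Since the filtration value of a simplex is the largest pairwise distance among its vertices and $|\max_i a_i - \max_i b_i| \le \max_i |a_i - b_i|$, this reduces the theorem to the estimate
\[
\big|\mathrm{VP}_q(\mathsf{S},\mathsf{S}') - \mathrm{VP}_{q'}(\mathsf{S},\mathsf{S}')\big| \le \frac{T(T+1)}{2}\,|q-q'| \qquad \text{for all } \mathsf{S},\mathsf{S}' \in \mathscr{S}_X.
\]

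For the estimate I would assume without loss of generality that $q < q'$ and let $\varphi:\mathsf{S}\rightharpoonup\mathsf{S}'$ be a partial bijection realizing $\mathrm{VP}_q(\mathsf{S},\mathsf{S}')$. Comparing the two cost functions on the same $\varphi$ gives $\mathrm{cost}_{q'}(\varphi) - \mathrm{cost}_q(\varphi) = (q'-q)\,S(\varphi)$, where $S(\varphi) \coloneqq \sum_{t\in\mathrm{dom}(\varphi)}|t-\varphi(t)|$, hence $\mathrm{VP}_{q'}(\mathsf{S},\mathsf{S}') \le \mathrm{VP}_q(\mathsf{S},\mathsf{S}') + (q'-q)\,S(\varphi)$. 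A symmetric comparison using an optimal bijection for $q'$ shows $\mathrm{VP}_q(\mathsf{S},\mathsf{S}') \le \mathrm{VP}_{q'}(\mathsf{S},\mathsf{S}')$ (raising the shift penalty can only increase the distance), so altogether $|\mathrm{VP}_q - \mathrm{VP}_{q'}| \le (q'-q)\,S(\varphi)$. Everything thus comes down to showing that the total shift of an \emph{optimal} partial bijection satisfies $S(\varphi) \le \frac{T(T+1)}{2}$.

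To bound $S(\varphi)$, I would first replace $\varphi$ by a monotone optimal bijection. If $t_1 < t_2$ both lie in $\mathrm{dom}(\varphi)$ but $\varphi(t_1) > \varphi(t_2)$, swapping these two assignments does not increase $\mathrm{cost}_q$, by the elementary rearrangement inequality $|t_1 - s_2| + |t_2 - s_1| \le |t_1 - s_1| + |t_2 - s_2|$ valid for $t_1 \le t_2$ and $s_2 \le s_1$; after finitely many such swaps we obtain an optimal $\varphi$ with $\varphi(t_1) < \varphi(t_2)$ whenever $t_1 < t_2$ (this also subsumes the degenerate case $q=0$, where any maximal partial bijection is optimal). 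For such a monotone $\varphi$, the values $m_t \coloneqq \max\{t,\varphi(t)\}$, $t\in\mathrm{dom}(\varphi)$, are strictly increasing in $t$: if $t_1 < t_2$ then $m_{t_2} \ge t_2 > t_1$ and $m_{t_2} \ge \varphi(t_2) > \varphi(t_1)$, so $m_{t_2} > m_{t_1}$. Hence the $m_t$ are distinct elements of $\{0,1,\dots,T\}$, and therefore
\[
S(\varphi) = \sum_{t\in\mathrm{dom}(\varphi)}|t-\varphi(t)| \le \sum_{t\in\mathrm{dom}(\varphi)} m_t \le \sum_{j=0}^{T} j = \frac{T(T+1)}{2},
\]
which finishes the argument.

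The routine parts are the reduction via Vietoris-Rips stability and the cost-function comparison. The one place where care is genuinely needed --- and where I expect the main obstacle to lie --- is the bound on $S(\varphi)$: the crude estimate $|t-\varphi(t)| \le T$ applied to the at most $T+1$ matched spikes yields only $T(T+1)$, off by a factor of two, so optimality (hence monotonicity) of $\varphi$ must really be exploited, together with the observation that the $\max$-values along a monotone matching are all distinct.
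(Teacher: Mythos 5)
Your proof is correct, and its skeleton is the same as the paper's: reduce bottleneck stability of the barcodes to a Lipschitz-in-$q$ estimate for the VP metric itself (this is exactly Lemma~\ref{lemma: VP stability lemma}), with the constant $\tfrac{T(T+1)}{2}$ coming from a bound on the total shift of an optimal partial bijection. The two supporting steps differ, and both differences are worth recording. For the reduction, the paper goes through Gromov--Hausdorff stability, $d_B \le 2\,d_{GH}$ from~\cite{chazal2009gromov}, together with a bound of $d_{GH}$ by the sup-difference of the two metrics; you instead apply the filtration-function stability theorem of~\cite{Cohen-Steiner2007} to the two Vietoris--Rips filtration functions on the common full simplex, which gives $d_B \le \|f-g\|_\infty \le \max_{\mathsf{S},\mathsf{S}'}|\mathrm{VP}_q(\mathsf{S},\mathsf{S}')-\mathrm{VP}_{q'}(\mathsf{S},\mathsf{S}')|$ directly and avoids the constant bookkeeping (along the paper's route one needs the sharp identity-correspondence bound $d_{GH}\le\tfrac12\sup|d-d'|$ to land on the stated constant). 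For the key estimate $\sum_{t\in\mathrm{dom}(\varphi)}|t-\varphi(t)|\le \tfrac{T(T+1)}{2}$, the paper argues informally that an optimal bijection shifts one spike by at most $T$, the next by at most $T-1$, and so on; your argument is more careful and in fact addresses a genuine subtlety: an arbitrary optimal partial bijection need not satisfy this bound (at $q=0$ the order-reversing matching of two full spike trains on $\{0,\dots,T\}$ is optimal, yet its total shift exceeds $\tfrac{T(T+1)}{2}$), so one must first pass to a monotone optimal representative, exactly as you do via the exchange/rearrangement step, after which the strict monotonicity of $t\mapsto\max\{t,\varphi(t)\}$ gives the bound. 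Your monotonicity observation $\mathrm{VP}_q\le\mathrm{VP}_{q'}$ for $q\le q'$ is a further small simplification over the paper's two symmetric comparisons. In short, your write-up is a valid proof and, on the lemma, a tightened version of the paper's argument.
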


\begin{remark}
    We make several comments regarding Theorem~\ref{thm: VP stability Theorem}.
\end{remark}
\begin{itemize}
    \item The Lipschitz constant is rather large. As we will see in the proof of Lemma~\ref{lemma: VP stability lemma}, it depends on a certain coarse estimate of the worst-case cost of a partial bijection. However, one can show that it is essentially tight---for example, one can easily construct explicit rasters which prove that the Lipschitz constant must be at least $T^2/8$. Indeed, one such construction is as follows: take $q=0$, $q'=1/2$, $k=0$, and, for simplicity, $T = 2\ell$; the bound is then achieved by the raster $\mathcal{R} = (\mathsf{S}_1,\mathsf{S}_2)$, where $\mathsf{S}_1 = \{1,2,\ldots,\ell\}$ and $\mathsf{S}_2 = \{\ell+1,\ldots,2\ell\}$. 
    \item With additional assumptions, one can improve the Lipschitz constant. For example, by imposing a sparsity assumption (i.e., an upper bound on the number of  spikes in a spike train) or assuming that $q,q' \geq 1$, one obtains a constant which scales linearly, instead of quadratically, in $T$. Rather than considering many special cases, we prefer to keep a simple statement with the  takeaway being general Lipschitz continuity.

    \item As a numerical validation, we refer the reader to Fig.~\ref{fig:vp stability}, which illustrates that small changes in $q$ values do not significantly affect the classification scores.
\end{itemize}

To prove Theorem~\ref{thm: VP stability Theorem}, we will need the following technical Lemma, which provides an upper bound on how much the VP distance can change as the parameter $q$ varies.

\begin{lemma} \label{lemma: VP stability lemma}
    Let $\mathsf{S}$ and $\mathsf{S}'$ be two spike trains over the time domain $X = \{0,1,2,\dots,T\}$. For any $q,q'\geq 0$, we have
    \[
        |\mathrm{VP}_q(\mathsf{S},\mathsf{S}')-\mathrm{VP}_{q'}(\mathsf{S},\mathsf{S}')| \leq \frac{T(T+1)}{2} |q-q'|
    \]
\end{lemma}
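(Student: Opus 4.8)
The plan is to compare the two VP distances by using an optimal partial bijection for one of the parameters as a (possibly suboptimal) competitor for the other. Fix $q, q' \geq 0$ and, without loss of generality, assume $q \geq q'$. Let $\varphi : \mathsf{S} \rightharpoonup \mathsf{S}'$ be an optimal partial bijection realizing $\mathrm{VP}_{q'}(\mathsf{S}, \mathsf{S}')$. Since $\varphi$ is in particular \emph{a} partial bijection, it is a valid competitor in the minimization defining $\mathrm{VP}_q$, so
\[
\mathrm{VP}_q(\mathsf{S}, \mathsf{S}') \leq \mathrm{cost}_q(\varphi) = \mathrm{cost}_{q'}(\varphi) + (q - q') \sum_{t \in \mathrm{dom}(\varphi)} |t - \varphi(t)| = \mathrm{VP}_{q'}(\mathsf{S}, \mathsf{S}') + (q-q') \sum_{t \in \mathrm{dom}(\varphi)} |t - \varphi(t)|,
\]
because the first two (cardinality) terms in $\mathrm{cost}_q(\varphi)$ do not depend on the parameter. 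Symmetrically, starting from an optimal partial bijection for $\mathrm{VP}_q$ and using it as a competitor for $\mathrm{VP}_{q'}$, together with $q \geq q'$, gives $\mathrm{VP}_{q'}(\mathsf{S},\mathsf{S}') \leq \mathrm{VP}_q(\mathsf{S},\mathsf{S}')$. Combining the two inequalities yields
\[
0 \leq \mathrm{VP}_q(\mathsf{S}, \mathsf{S}') - \mathrm{VP}_{q'}(\mathsf{S}, \mathsf{S}') \leq (q - q') \sum_{t \in \mathrm{dom}(\varphi)} |t - \varphi(t)|,
\]
so it remains only to bound the shift sum $\sum_{t \in \mathrm{dom}(\varphi)} |t - \varphi(t)|$ by $T(T+1)/2$ uniformly over all partial bijections $\varphi$ between spike trains over $X$.

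For the uniform bound on the shift sum, the key point is that $\varphi$ is an injection between subsets of $X = \{0, 1, \dots, T\}$. I would argue that for any injection $\varphi$ from a subset $D \subseteq X$ to $X$, we have $\sum_{t \in D} |t - \varphi(t)| \leq T(T+1)/2$. One clean way: write $|t - \varphi(t)| \leq \max(t, \varphi(t))$ — actually more carefully, bound each term and sum. Since the values $\{\varphi(t) : t \in D\}$ are distinct elements of $\{0,\dots,T\}$ and the points $t \in D$ are distinct elements of $\{0,\dots,T\}$, we can crudely estimate $\sum_{t \in D} |t - \varphi(t)| \leq \sum_{t \in D} t + \sum_{t \in D} \varphi(t) \leq 2\sum_{j=0}^{T} j = T(T+1)$ — but this gives $T(T+1)$, off by a factor of $2$. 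To recover the stated constant, I would instead note that $|t - \varphi(t)| \leq \max\{t, \varphi(t)\}$ is false in general, so a sharper pairing argument is needed: observe that if we sum $|t - \varphi(t)|$ and reorganize by whether $t > \varphi(t)$ or $t < \varphi(t)$, the ``up'' moves and ``down'' moves each use distinct source and target coordinates; a transport/rearrangement argument shows the worst case is the order-reversing bijection on all of $X$, whose shift sum is $\sum_{j=0}^{T}|j - (T-j)| = \sum_{j=0}^{T}|T - 2j|$, which evaluates to $\lceil (T+1)^2/2 \rceil \cdot$ (something) — and one checks this is at most $T(T+1)/2$ when... Hmm, for $T = 2\ell$ this sum is $2(1 + 3 + \dots) $; let me just say: a direct computation shows $\max_{\varphi} \sum_{t}|t-\varphi(t)|$ equals $\lfloor (T+1)^2/2 \rfloor$ when the domain is all of $X$, and one verifies $\lfloor (T+1)^2/2\rfloor \le T(T+1)/2$ fails for small $T$ — so in fact the honest route is the crude bound giving a larger constant, OR the authors intend the simple estimate $|t-\varphi(t)| \le T$ for each of at most $\lfloor (T+1)/2 \rfloor$... no. I would ultimately prove the clean inequality $\sum_{t \in D}|t - \varphi(t)| \le \sum_{j=1}^{T} j = T(T+1)/2$ by induction on $|D|$, peeling off the pair with the largest displacement and using that removing one source and one target from $\{0,\dots,T\}$ leaves something order-embeddable into $\{0,\dots,T-1\}$.

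The main obstacle is precisely this last combinatorial estimate: getting the sharp constant $T(T+1)/2$ rather than the naive $T(T+1)$ requires care, since the two endpoints of each matched pair are consumed from a common pool of $T+1$ integers, and a clean induction (or a LP/transportation-polytope extreme-point argument identifying the order-reversing bijection as worst case) is the natural tool. Everything else — the optimality-as-competitor comparison and the monotonicity $\mathrm{VP}_{q'} \le \mathrm{VP}_q$ for $q' \le q$ — is routine and follows directly from Definition~\ref{definition: VP distance}. Once the shift-sum bound is in hand, the lemma follows immediately by the displayed chain of inequalities.
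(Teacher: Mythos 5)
Your reduction is the same as the paper's: use an optimal partial bijection for one parameter as a competitor for the other, note that only the $q$-weighted term of the cost changes, and thereby reduce the lemma to bounding the shift sum $\sum_{t\in\mathrm{dom}(\varphi)}|t-\varphi(t)|$ by $T(T+1)/2$. The gap is in how you try to obtain that bound: you attempt to prove it \emph{uniformly over all} partial bijections (i.e., all injections between subsets of $\{0,\dots,T\}$), and that statement is false. Already for $T=1$ the swap $0\mapsto 1$, $1\mapsto 0$ has shift sum $2>T(T+1)/2=1$, and in general the order-reversing bijection of $\{0,\dots,T\}$ onto itself has shift sum about $T(T+2)/2$, which exceeds $T(T+1)/2$ for every $T\geq 1$. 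Your own mid-proof computation was pointing at exactly this, and the fallback induction you sketch (peel off the largest displacement and ``order-embed'' the remaining points into $\{0,\dots,T-1\}$) cannot succeed because it targets this false uniform statement; moreover, an order-embedding gives no control on the quantities $|t-\varphi(t)|$.

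The repair---and this is precisely where the paper's proof invokes optimality, via its informal ``if one spike shifts by $T$, the next shifts by at most $T-1$, and so on''---is that you only need the bound for one \emph{well-chosen} optimal $\varphi$, and you are free to choose it. A standard uncrossing argument (if $s_1<s_2$ in $\mathrm{dom}(\varphi)$ with $\varphi(s_1)>\varphi(s_2)$, then swapping the two targets never increases the displacement sum, hence never increases $\mathrm{cost}_{q'}$) shows that there always exists an optimal partial bijection that is non-crossing, i.e., order-preserving on its domain. For a non-crossing partial matching the displacement sum is at most $\lfloor (T+1)^2/4\rfloor\leq T(T+1)/2$: for each unit interval $(x,x+1)$, monotonicity forces all matched pairs straddling it to move in the same direction, so at most $\min(x+1,\,T-x)$ pairs straddle it, and summing over $x=0,\dots,T-1$ gives the bound. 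Note that appealing to optimality alone, without fixing a non-crossing representative, is not enough: at $q'=0$ (and your $\varphi$ is optimal for the \emph{smaller} parameter, which may be $0$) every full matching of equal-size trains is optimal, including order-reversing ones with large shift sums, so the choice of optimal bijection genuinely matters. With a non-crossing optimal $\varphi$ in hand, your displayed chain of inequalities completes the proof exactly as intended, and in essentially the same way as the paper.
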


\begin{proof}
    We begin by noting that there are only finitely many partial bijections from $\mathsf{S}$ to $\mathsf{S}'$. Hence, there exist partial bijections $\varphi$ and $\varphi'$ that are optimal with respect to $q$ and $q'$, respectively. We first observe that
    \[
        \sum_{t\in \mathrm{dom}(\varphi)}|t - \varphi(t)| \leq T+ (T-1) + (T-2) +\cdots + 1 = \frac{T(T+1)}{2}.
    \]
    This inequality follows from the fact that for an optimal partial bijection, the maximum possible shift amount is $T$. Moreover, if $\varphi$ shifts one spike by $T$, then the next maximum possible shift is $T-1$, and so on. It follows that
    \begin{align*}
        \mathrm{cost}_{q'}(\varphi)-\mathrm{cost}_{q}(\varphi) \leq |q-q'| \sum_{t\in \mathrm{dom}(\varphi)}|t-\varphi(t)| \leq \frac{T(T+1)}{2} |q-q'|.
    \end{align*}
Using this we calculate
    \begin{align*}
        \mathrm{VP}_{q'}(\mathsf{S},\mathsf{S}') &= \mathrm{cost}_{q'}(\varphi') \\
        & \leq \mathrm{cost}_{q'}(\varphi) \\
        & \leq \mathrm{cost}_{q}(\varphi) + \frac{T(T+1)}{2} |q-q'| \\
        & =\mathrm{VP}_q(\mathsf{S},\mathsf{S}') + \frac{T(T+1)}{2} |q-q'|
    \end{align*}
    that is,
    \[  
        \mathrm{VP}_{q'}(\mathsf{S},\mathsf{S}')-\mathrm{VP}_{q}(\mathsf{S},\mathsf{S}') \leq  \frac{T(T+1)}{2} |q-q'|
    \]
    By the same argument one can also show that
    \[  
        \mathrm{VP}_{q}(\mathsf{S},\mathsf{S}')-\mathrm{VP}_{q'}(\mathsf{S},\mathsf{S}') \leq  \frac{T(T+1)}{2} |q-q'|,
    \]
    which concludes the proof.
\end{proof}

We are now ready to prove Theorem~\ref{thm: VP stability Theorem}.
\begin{proof}[Proof of Theorem~\ref{thm: VP stability Theorem}]
    A classic result in TDA, proved in~\cite{chazal2009gromov}, shows that
\begin{equation}\label{eqn:GH_stability}
d_B\big(\mathcal{B}_k(\mathcal{R},\mathrm{VP}_{q}),\mathcal{B}_k(\mathcal{R},\mathrm{VP}_{q'})\big) \leq 2 \cdot d_{GH}\big((\mathcal{R},\mathrm{VP}_{q}),(\mathcal{R},\mathrm{VP}_{q'})\big). 
    \end{equation}
    Here, the right-hand side denotes the Gromov-Hausdorff (GH) distance between the metric spaces $(\mathcal{R},\mathrm{VP}_{q})$ and $(\mathcal{R},\mathrm{VP}_{q'})$ (see, e.g.,~\cite{burago2001course} for background on Gromov-Hausdorff distance). It is immediate from the definition of GH distance that, for compact metrics $d,d'$ on a common set $X$, 
    \[
    d_{GH}\big((X,d),(X,d')\big) \leq \sup_{x,x' \in X} |d(x,x') - d'(x,x')|.
    \]
    Applying this to the VP distances, we have 
    \[
    d_{GH}\big((\mathcal{R},\mathrm{VP}_{q}),(\mathcal{R},\mathrm{VP}_{q'})\big)  \leq \max_{\mathsf{S},\mathsf{S}' \in \mathcal{R}} |\mathrm{VP}_{q}(\mathsf{S},\mathsf{S}') - \mathrm{VP}_{q'}(\mathsf{S},\mathsf{S}')|  \leq \frac{T(T+1)}{2}|q-q'|,
    \]
    where the last inequality follows from Lemma~\ref{lemma: VP stability lemma}. Putting this together with \eqref{eqn:GH_stability} completes the proof.
\end{proof}

\subsection{Stability With Respect to Changes in the Probability Measure}
In this section, we investigate the behavior of our framework with respect to changes in the probability measures governing the distribution of spike trains in a neuron or neuronal network. We start by developing some intuition for this question before we formulate our main result. Consider a network whose responses to two distinct stimuli differ only slightly (e.g., as a result of presenting nearly identical stimuli such as water at $16^{\circ}$C versus water at $16.5^{\circ}$C); our result says that train ensembles observed from such a network will yield persistence barcodes that are likewise similar.

To precisely formulate the intuition described, we first observe that our method, as described in Section~\ref{section: pipeline}, does not depend on the ordering of spike trains within an ensemble, i.e., the symmetric group $\mathsf{Sym}(k)$ acts on $\mathscr{S}_X^k$ by permuting the spike trains in an ensemble, and our framework operates on the quotient
\[
\mathscr{S}_X^k/\mathsf{Sym}(k)
\]
where two train ensembles are identified if one can be obtained from the other by reordering of its spike trains. Indeed, this is a consquence of forgetting the orderings in rasters, as described in Remark~\ref{rem:train_ensemble_convention}. This quotient structure will play a more pivotal role in this subsection, so we momentarily forgo the abuse of notation introduced in Remark~\ref{rem:train_ensemble_convention} and more carefully track the distinction between $\mathscr{S}_X^k$ and $\mathscr{S}_X^k/\mathsf{Sym}(k)$ in what follows.

For the rest of this subsection, we consider an arbitrary metric $d_{\mathscr{S}_X}$ on the set of spike trains $\mathscr{S}_X$; this could be a Victor-Purpura distance, or one of the many other spike train metrics used in the literature---see, e.g.,~\cite{victor2005spike}. The quotient  $\mathscr{S}_X^k/\mathsf{Sym}(k)$ represents unordered $k$-point subsets of $(\mathscr{S},d_{\mathscr{S}})$, and we equip it with the Hausdorff distance $d_H$:
\[
d_H (\{ \mathsf{S}_1,\mathsf{S}_2,\dots,\mathsf{S}_k\}, \{ \mathsf{S}'_1,\mathsf{S}'_2,\dots,\mathsf{S}'_k\}    ) = \max \Bigg\{
\max_{1 \leq i \leq k} \min_{1\leq j \leq k} d_{\mathscr{S}}(\mathsf{S}_i,\mathsf{S}'_j), \max_{1 \leq j \leq k} \min_{1\leq i \leq k} d_{\mathscr{S}}(\mathsf{S}'_j,\mathsf{S}_i)\Bigg\},
\]
which turns the space $\mathscr{S}_X^k/\mathsf{Sym}(k)$ into a metric space---indeed, $\mathscr{S}_X^k/\mathsf{Sym}(k)$ is a subset of the set of all subsets of $\mathscr{S}_X$, so the Hausdorff induces a metric via general principles~\cite{burago2001course} (more precisely, $\mathscr{S}_X^k/\mathsf{Sym}(k)$ can be identified with a set of \emph{multisets} of elements of $\mathscr{S}_X$, but this distinction is not important in what follows). We leverage this to define a pseudometric on $\mathscr{S}_X^k$, given by
\[
\hat{d}_H\left( 
\mathcal{R},\mathcal{R}'
\right) \coloneqq d_H \left( 
\pi (\mathcal{R}), \pi(\mathcal{R}')
\right)
\]
where $\pi:\mathscr{S}_X^k \to \mathscr{S}_X^k / \mathsf{Sym}(k)$ denotes the quotient map (that is, $\hat{d}_H$ is the pseudometric induced by considering each train ensemble as a set, rather than an ordered tuple, as in Remark~\ref{rem:train_ensemble_convention}). Note that, for a train ensemble $(\mathsf{S}_i)_{i=1}^k$, the class $\pi (\mathsf{S}_i)_{i=1}^k$ is naturally identified with the unordered set $\{ \mathsf{S}_1,\mathsf{S}_2,\dots, \mathsf{S}_k\}$. Using that $d_H$ is a metric it is easy to show that $\hat{d}_H$ is a pseudometric: it is non-negative, symmetric, and satisfies the triangle inequality. However, it is not definite---any two train ensembles that differ only by a permutation of their spike trains yield distance $0$, even though they are distinct elements of $\mathscr{S}_X^k$.

As this section is concerned with changes in probability measures, we require a metric with which to compare them. For a bounded pseudometric space $(X,d)$, we use the Wasserstein distance from optimal transport theory to compare Borel probability measures $\mu,\mu' \in \mathbb{P}(X)$. We recall the basic definition here; see, e.g.,~\cite{villani2008optimal} for a comprehensive reference. For $p \in [1,+\infty]$, the \textbf{$p$-Wasserstein distance} is given by 
\[
W_p^{d}(\mu,\mu') \coloneqq \inf_\pi \|d\|_{L^p(\pi)},
\]
where the infimum is over measure couplings and $\|\cdot\|_{L^p(\pi)}$ denotes the standard $L^p$-norm. Recall that a \textbf{measure coupling} is a Borel probability measure $\pi \in \mathbb{P}(X \times X)$ such that $(\mathrm{proj}_1)_\# \pi = \mu$ and $(\mathrm{proj}_2)_\# \pi = \mu'$, where $\mathrm{proj}_1, \mathrm{proj}_2:X \times X \to X$ are projection maps onto the left and right factors, and we use subscript $\#$ to denote the pushforward by a measurable map.

We are now ready to state our theorem:

\begin{theorem} \label{thm:stability_prob_measure}
    Let $\mu,\mu' \in \mathbb{P}(\mathscr{S}_X^k)$ be probability measures on the space of train ensembles  and let $\mathcal{B}:\mathcal{R} \mapsto \mathcal{B}_k(\mathcal{R},d_{\mathscr{S}_X})$ 
    be the map sending a train ensemble $\mathcal{R}$ to its persistence barcode (for fixed degree $k$, with respect to the Vietoris-Rips complex of 
 $d_{\mathscr{S}_X}$).
    Then
    \[
    W_p^{d_B}(\mathcal{B}_\#\mu,\mathcal{B}_\# \mu') \leq 2 \cdot W_p^{\hat{d}_H}(\mu,\mu').
    \]
\end{theorem}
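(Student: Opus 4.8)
The plan is to factor the statement through a single deterministic Lipschitz estimate: that the barcode map $\mathcal{B} : (\mathscr{S}_X^k, \hat{d}_H) \to (\text{persistence diagrams}, d_B)$ is $2$-Lipschitz, and then to invoke the elementary fact that $L$-Lipschitz maps contract the $p$-Wasserstein distance by the factor $L$. Since the time domain $X$ is finite, $\mathscr{S}_X$ is a finite set, hence $\mathscr{S}_X^k$ is finite; consequently every (pseudo)metric in sight is bounded, all Wasserstein distances are finite, and measurability of $\mathcal{B}$ is automatic, so no analytic subtleties arise.

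For the Lipschitz bound, fix two train ensembles $\mathcal{R},\mathcal{R}' \in \mathscr{S}_X^k$ and regard each as a finite subset of the metric space $(\mathscr{S}_X, d_{\mathscr{S}_X})$. Using the identity inclusions of $\mathcal{R}$ and $\mathcal{R}'$ into this common ambient space, the definition of the Gromov--Hausdorff distance as an infimum over isometric co-embeddings gives $d_{GH}\big((\mathcal{R}, d_{\mathscr{S}_X}),(\mathcal{R}', d_{\mathscr{S}_X})\big) \leq d_H(\mathcal{R},\mathcal{R}')$, where the right-hand side is the Hausdorff distance computed inside $(\mathscr{S}_X, d_{\mathscr{S}_X})$---which is exactly $\hat{d}_H(\mathcal{R},\mathcal{R}')$ by the construction preceding the theorem. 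Combining this with the Gromov--Hausdorff stability inequality for Vietoris--Rips persistent homology, $d_B\big(\mathcal{B}_k(\mathcal{R}, d_{\mathscr{S}_X}),\mathcal{B}_k(\mathcal{R}', d_{\mathscr{S}_X})\big) \leq 2\, d_{GH}$ (the same inequality already used in \eqref{eqn:GH_stability}), yields $d_B\big(\mathcal{B}(\mathcal{R}),\mathcal{B}(\mathcal{R}')\big) \leq 2\,\hat{d}_H(\mathcal{R},\mathcal{R}')$, i.e.\ $\mathcal{B}$ is $2$-Lipschitz.

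It then remains to transfer this to the level of measures. I would record the general principle: if $f : (Z,d_Z) \to (W,d_W)$ is $L$-Lipschitz between bounded (pseudo)metric spaces, then $W_p^{d_W}(f_\#\nu, f_\#\nu') \leq L\, W_p^{d_Z}(\nu,\nu')$ for all $\nu,\nu' \in \mathbb{P}(Z)$ and all $p \in [1,+\infty]$. The proof is one line: given a measure coupling $\pi$ of $\nu$ and $\nu'$, the pushforward $(f\times f)_\#\pi$ is a coupling of $f_\#\nu$ and $f_\#\nu'$, and since $d_W(f(z),f(z')) \leq L\, d_Z(z,z')$ pointwise, $\|d_W\|_{L^p((f\times f)_\#\pi)} \leq L\,\|d_Z\|_{L^p(\pi)}$; taking the infimum over $\pi$ gives the claim. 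Applying this with $f = \mathcal{B}$, $L = 2$, $\nu = \mu$, $\nu' = \mu'$ produces exactly $W_p^{d_B}(\mathcal{B}_\#\mu, \mathcal{B}_\#\mu') \leq 2\, W_p^{\hat{d}_H}(\mu,\mu')$.

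There is essentially no hard step here: the result is a composition of two standard stability facts with a routine transport lemma. The only points deserving a word of care are (i) verifying that the Hausdorff distance appearing in the definition of $\hat{d}_H$ is literally the ambient Hausdorff distance between $\mathcal{R}$ and $\mathcal{R}'$ as subsets of $(\mathscr{S}_X, d_{\mathscr{S}_X})$, so that the Gromov--Hausdorff bound applies verbatim, and (ii) noting that $\hat{d}_H$ being only a pseudometric (and $d_B$ extended-valued) causes no difficulty, since couplings, pushforwards, and the Lipschitz estimate are insensitive to definiteness, and finiteness of $\mathscr{S}_X$ makes all quantities finite in any case.
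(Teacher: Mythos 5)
Your proposal is correct and takes essentially the same route as the paper: bound $d_B(\mathcal{B}(\mathcal{R}),\mathcal{B}(\mathcal{R}'))$ by $2\,d_{GH}$ via the Vietoris--Rips stability theorem of Chazal et al., compare $d_{GH}$ with the ambient Hausdorff distance defining $\hat{d}_H$, and push the resulting $2$-Lipschitz bound through couplings to contract the $p$-Wasserstein distance (the paper cites this last lemma rather than proving it, and phrases the Hausdorff comparison via the quotient $\mathscr{S}_X^k/\mathsf{Sym}(k)$, but the content is identical). If anything, your middle step $d_{GH}\le d_H=\hat{d}_H$ carries the constant correctly, whereas the paper's displayed chain momentarily writes $2\,d_{GH}\le d_H$ before concluding $2$-Lipschitzness, so your version is the cleaner bookkeeping of the same argument.
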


Intuitively, the theorem says that if two distributions of spike train ensembles (arising as, say, stimulus responses of neurons) are similar, then so are the induced distributions of persistence diagrams. 

\begin{proof}
Let $\mathsf{Dgm}$ denote the space of persistence diagrams. We first note that the map $\mathcal{B}$ factors as 
\[\begin{tikzcd}[ampersand replacement=\&,cramped]
	{\mathscr{S}_X^k} \&\& {\mathsf{Dgm}} \\
	\\
	\& {\mathscr{S}_X^k/ \mathsf{Sym}(k)}
	\arrow["{\mathcal{B}}", from=1-1, to=1-3]
	\arrow["\pi"', from=1-1, to=3-2]
	\arrow["{\mathsf{ph}}"', from=3-2, to=1-3]
\end{tikzcd}\]
where $\mathsf{ph}$ is the Vietoris-Rips persistent homology map in the same homology degree as $\mathcal{B}$. This factorization of $\mathcal{B}$, will allow us to directly leverage the classical stability of the persistent homology map $\mathsf{ph}$ from~\cite{chazal2009gromov}.

We next show that
    \[
    \mathcal{B}:(\mathscr{S}_X^k,\hat{d}_H) \to (\mathsf{Dgm},d_B)
    \]
    is $2$-Lipschitz. Indeed, let $\mathcal{R}$ and $\mathcal{R}'$ be two train ensembles. Then
    \begin{align}
        d_B\left(\mathsf{ph} \circ \pi (\mathcal{R}),\mathsf{ph}\circ \pi(\mathcal{R}')\right) &\leq  2 \cdot d_{GH}\left(\pi (\mathcal{R}),\pi (\mathcal{R}')\right) \label{eqn:stability_prob_measure1} \\
        & \leq d_H\left(\pi (\mathcal{R}),\pi (\mathcal{R}')\right) \label{eqn:stability_prob_measure2} \\
        &= \hat{d}_H\left(\mathcal{R},\mathcal{R}'\right),  \nonumber
    \end{align}
    where the steps are justified as follows. First, recall that $d_{GH}$ denotes the Gromov-Hausdorff distance; \eqref{eqn:stability_prob_measure1} follows, as in the proof of Theorem~\ref{thm: VP stability Theorem}, by applying the main result of~\cite{chazal2009gromov}. The bound \eqref{eqn:stability_prob_measure2} follows immediately from the definition of Gromov-Hausdorff distance, and the last line is by definition of $\hat{d}_H$.  Thus, the map $\mathcal{B}$ is $2$-Lipschitz.
    
    To conclude the proof, we will use a basic, general fact about Wasserstein distances. Let $(X,d_X)$ and $(Y,d_Y)$ be bounded pseudometric spaces and let $f:X \to Y$ be a $K$-Lipschitz map (for some $K > 0$). Then, for any Borel probability measures $\mu,\mu' \in \mathbb{P}(X)$, 
    \[
    W^Y_p(f_{\#}\mu,f_{\#}\mu') \leq K \cdot W^X_p(\mu,\mu').
    \]
    This is well-known in the metric space setting (see, e.g.,~\cite[Lemma 4.11]{gomez2025metrics} for a concrete reference), and follows easily from the definition even in the pseudometric space, by following the same proof. 
    Applying this result to $\mathcal B: \mathscr{S}_X^k \to \mathsf{Dgm},d_B$ directly leads to the desired inequality
    \[
    W_p^{d_B}(\mathcal{B}_\#\mu,\mathcal{B}_\# \mu') \leq 2\cdot W_p^{\hat{d}_H}(\mu,\mu').
    \]
\end{proof}

\section{Results With Synthetic and Biological Data} \label{sec:experimental results and applications to real data}

In this section, we demonstrate the capabilities of our persistent homology pipeline using both synthetic and biological data. For the synthetic data, we present two illustrative scenarios. First, 
we illustrate how the pipeline successfully classifies response stimuli even in situations where single-neuron analyses fall short. Second, we show that information may be encoded beyond simple connectivity, demonstrating a case where $0$-dimensional structure is insufficient and higher-dimensional features become essential. For the biological data, we examine a classification task involving {\em in vivo} recordings from the gustatory cortex of behaving mice. Here too, we show that the persistent homology pipeline provides superior stimulus-response discrimination compared to a single-neuron method. 
\subsection{Implementation and Experimental Setup}
Before we describe the results in more details we will briefly comment on the chosen implementation and the experimental setup for the biological data set.
\paragraph{Implementation details} All analysis was performed on laptop CPUs.
\begin{itemize}
    \item \textbf{Victor-Purpura distance.} For general VP distance computations, we used the \texttt{Elephant} package~\cite{elephant18}. In the special case $q>2$, the distance formula reduces to the simplified form
    \[
        \mathrm{VP}(\mathsf{S},\mathsf{S}')
        = |\mathsf{S}\setminus \mathsf{S}'| + |\mathsf{S}'\setminus \mathsf{S}|,
    \]
    (see remark~\ref{remark: VP distance}) for which we employed a custom implementation, denoted \texttt{VP\_trivial} in our code. 

    \item \textbf{Persistent homology.} Vietoris-Rips persistent homology was computed using the \texttt{Ripser} software package~\cite{bauer2021ripser,ctralie2018ripser}.
    \item \textbf{Bottleneck distance.} To compute the bottleneck distance for degree-$0$ diagrams in the synthetic examples, we used a custom implementation, \texttt{bottleneck\_zero}, based on~\cite[Corollary~4.15]{ayhan2025equivalence}, which is more computationally efficient than general-purpose implementations such as \texttt{persim}~\cite{scikittda}, but which applies only to degree-$0$ diagrams. Notably, in this setting, the bottleneck distance coincides with both the landscape distance~\cite{bubenik2015statistical} and the erosion distance~\cite{patel2018generalized}, which were recently proved equivalent in~\cite{ayhan2025equivalence}. For degree-$1$ barcodes, we computed the bottleneck distance using the \texttt{persim} implementation from the \texttt{scikit-tda} library~\cite{scikittda}. For experiments with recorded data from the Vincis lab, bottleneck distance for degree-$0$ diagrams was computed with the \texttt{Gudhi} package~\cite{gudhi:urm, gudhi:BottleneckDistance}.
\end{itemize}
\paragraph{Experimental Methods}
To demonstrate the pipeline with biological data we used neural recordings obtained from 16 adult wild-type C57BL/6J mice purchased from the Jackson Laboratory (Bar Harbor, ME). The primary dataset, originally published in~\cite{Bouaichi2023-ol}, consists of simultaneous spike train recordings from 433 neurons in the gustatory insular cortex (GC). During each session, mice were permitted to freely lick to receive a single droplet of deionized water delivered at one of three non-nociceptive temperatures:  cool ($14^{\circ}$C), near room temperature ($25^{\circ}$C), and warm ($36^{\circ}$C). 

Neural activity was recorded across multiple days, with each session sampling between 4–30 distinct neurons. To ensure broad sampling of neuronal populations, recording devices were gradually advanced between sessions (by approximately 200 $\mu$m for silicon probes and 100 $\mu$m for tetrode bundles), thereby yielding a unique set of neurons each day. From these sessions, we constructed 38 distinct networks, each corresponding to the simultaneous recordings from one mouse on one day. For each trial, 
the persistent homology pipeline was applied to 
data from time -200 ms before the fluid delivery to 1600 ms after delivery, to focus on neural activity most relevant to stimulus processing.

Each network consisted of a group of \( k \) neurons ($k \in \{4, 5, \dots, 30 \}$) stimulated under conditions $s_i \in I$,   $I = \{s_1=14^{\circ}\text{C}, s_2=36^{\circ}\text{C}, s_3=25^{\circ}\text{C}\}$. For each network, ensembles were created by sampling spike trains across repeated trials for each stimulus. Ten randomly selected trials per condition and 20 pipeline repetitions were used, yielding for each repetition 30 train ensembles $\mathcal{R}$ (ten per stimulus condition). These ensembles served as inputs to the persistent homology pipeline.

\subsection{Results Using Synthetic Data}
In this subsection we apply the pipeline to two experiments using synthetic data.
\paragraph{TDA Discerns Patterns in Network Activity That Are Not Present in Individual Neurons}
In our first example, we construct a network of five `neurons' that was exposed to two stimuli 100 times each, creating 100 trials of synthetic data to analyze. Network responses were generated for each stimulus by constructing spike trains according to a specified firing-rate probability distribution and then perturbing them with a random noise process. 
We construct Neurons 1, 3, and 5 
to have very similar firing patterns over the entire dataset, always firing with a high rate for the first 50 ms and then spontaneously firing over the rest of the timeline. In contrast, neurons 2 and 4 are constructed to fire at the same time in response to stimulus 1, but at shifted times in response to stimulus 2. With this construction, the firing patterns of neurons 1, 3, and 5 are the same for both stimuli. For neurons 2 and 4, the two dominant spiking patterns for stimulus 1  are also the dominant spiking patterns for stimulus 2. Therefore, the spiking patterns of individual neurons do not discriminate between the two stimuli and a classification score based solely on single neuron activity should be near the value for random guessing. 
Representative raster plots of the generated spike trains are shown in Fig.~\ref{fig:synthetic_network}A. They show two example responses to each stimulus.

Next, we describe the results of our network approach, where we utilize all available trials as input to a single repetition of the pipeline.
To assess the ability of the network to discriminate between stimuli we performed a \emph{1-Nearest Neighbor} (1-NN) classification~\cite{cover1967nearest} using the bottleneck distance matrix (BDM) and evaluate its performance via \emph{Leave-One-Out Cross-Validation} (LOOCV)~\cite{stone1974cross}. The bottleneck distance matrix (BDM) is constructed such that row~$i$ of the matrix corresponds to the set of bottleneck distances between the barcode for train ensemble $\mathcal{R}_i$ and the barcodes for all other train ensembles.  In the LOOCV method, each train ensemble, $\mathcal{R}_i$,  serves once as a test set while the remaining ensembles form the training set. Each element of row $i$ of the BDM represents a comparison of a training sample with the test sample. The smallest element of the row corresponds to the spike train that is most similar to that of the test train, and so the test train is given the same stimulus label. If this label is correct, i.e., it corresponds to the stimulus that produced the test train, then the number of correctly classified samples is incremented by one.  After iterating through all samples, the final classification score is computed as the number of correctly classified samples divided by the total number of samples.

\begin{figure}[htbp]
    \centering
    \includegraphics[width=1\linewidth]{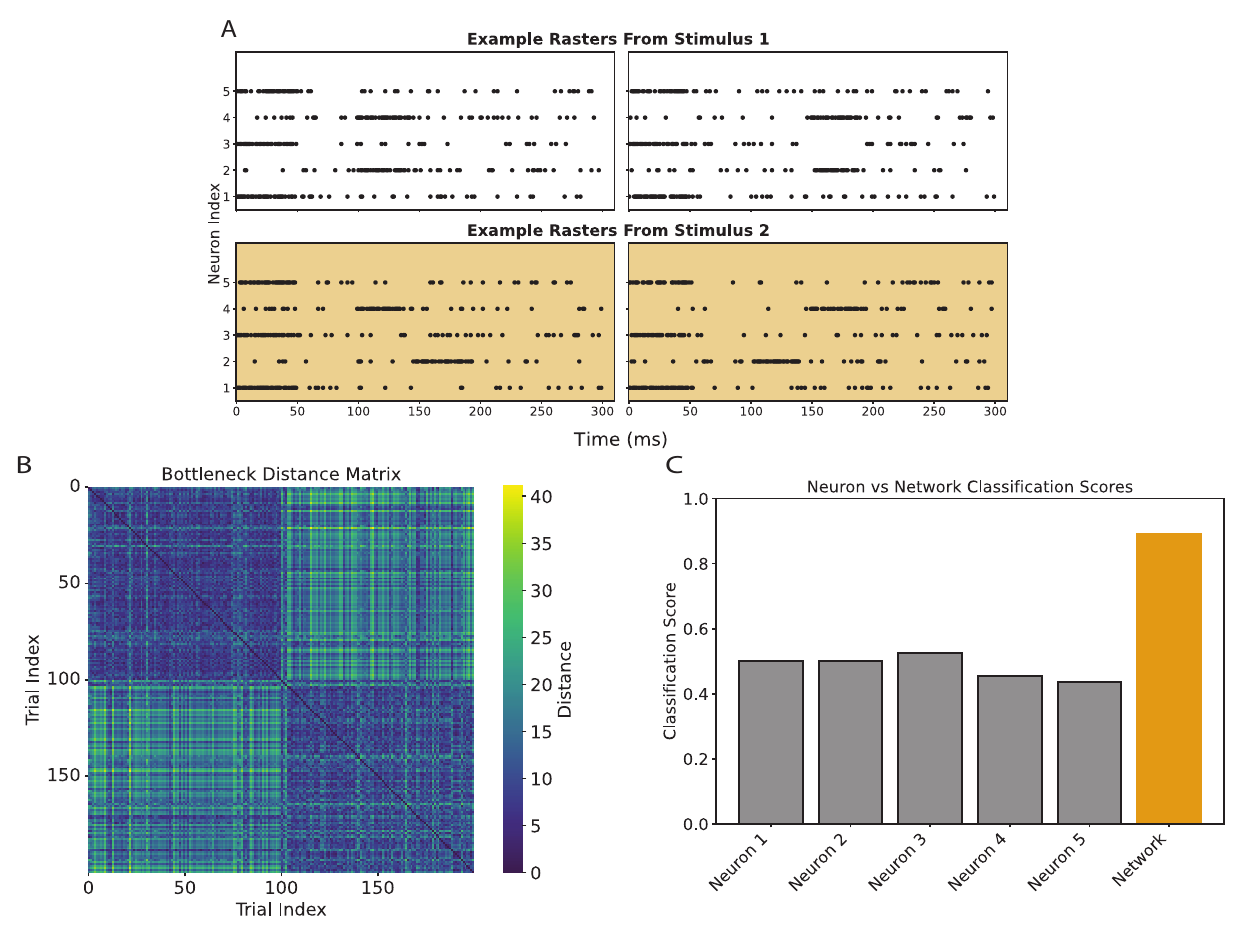}
  \caption{Synthetic network classification analysis. \textbf{(A)} A pair of raster plots generated for each of two stimuli, as described in the text. \textbf{(B)} Bottleneck distance matrix computed from the synthetic spike trains. \textbf{(C)} Classification scores for the synthetic network and its constituent neurons. Individual neuron scores (grey) have means near random guessing for all neurons. The network score (orange) has a mean significantly greater than the score for random guessing (t-test, $p<0.05$).}
    \label{fig:synthetic_network}
\end{figure}

Our experiments demonstrate that the persistent homology pipeline can discriminate response patterns that are present in the network but not in the individual neurons. That is, individual neurons fail to reliably discriminate between stimuli on their own, yet the population activity contains sufficient structure for robust stimulus classification. This indicates that the network contains more information than the sum of individual neurons and that the persistent homology pipeline can extract this information.

These findings are presented in Fig.~\ref{fig:synthetic_network}B, where 
a clear pattern in the bottleneck distance matrix is evident: it splits into four blocks, indicating there are two groups of trials that have low distance within groups and high distance between groups, on average. In Fig.~\ref{fig:synthetic_network}C, we compare the classification score of the network to the scores for individual neurons. The network achieves a classification accuracy of $0.89$ (Fig.~\ref{fig:synthetic_network}C). As expected, the individual neuron scores (computed as described in Appendix~\ref{sec:synthetic_appendix}) are close to random guessing at $0.50, 0.50, 0.53, 0.46, \text{ and }0.44$, respectively. In summary this example illustrates that the TDA analysis is effective at picking out patterns in network activity that are not present in the individual neurons. 

\paragraph{TDA Detects Higher-Order Structure Beyond Connectivity}
In our second experiment with synthetic data, we constructed a pair of train ensembles, $\mathcal{R}$ and $\mathcal{R}'$, such that their $0$-dimensional barcodes are nearly identical, i.e., $\mathcal{B}_0(\mathcal{R},\mathrm{VP}_2) \approx \mathcal{B}_0(\mathcal{R}',\mathrm{VP}_2)$, while their $1$-dimensional barcodes differ substantially. In particular, we ensured that $\mathcal{B}_1(\mathcal{R},\mathrm{VP}_2)$ contains a prominent $1$-dimensional  feature, whereas $\mathcal{B}_1(\mathcal{R}',\mathrm{VP}_2)$ is empty.

As illustrated in Figure~\ref{fig:h1_example}, the ensemble $\mathcal{R}$ exhibits a cascading activation pattern: neuron~8 activates slightly before neuron~7, which in turn precedes neuron~6, and so on, eventually looping back so that neuron~1 precedes neuron~8 again. Although this pattern would not be as apparent to the naked eye after shuffling neuron labels, persistent homology is invariant under such permutations and reliably detects the resulting loop.

In contrast, the activity in ensemble $\mathcal{R}'$ does not form a loop, leaving the $1$-dimensional barcode empty. Although their $0$-dimensional barcodes differ only by a bottleneck distance of $1$, the corresponding $1$-dimensional barcodes are separated by a distance of~$19$.

This example demonstrates that the functional networks for two train ensembles may have indistinguishable connectivity structure (as captured by $0$-dimensional persistence) yet differ in their higher-dimensional organization.  We note that higher-order features can, in principle, also be detected in more complex datasets, though their biological interpretation becomes increasingly speculative. In what follows, we restrict attention to $0$-dimensional homology, as the functional networks formed from the datasets possess sufficient informative connectivity structure to distinguish between them, but  
lack persistent $1$-dimensional features.

\begin{figure}[htbp]
    \centering
    \includegraphics[width=1\linewidth]{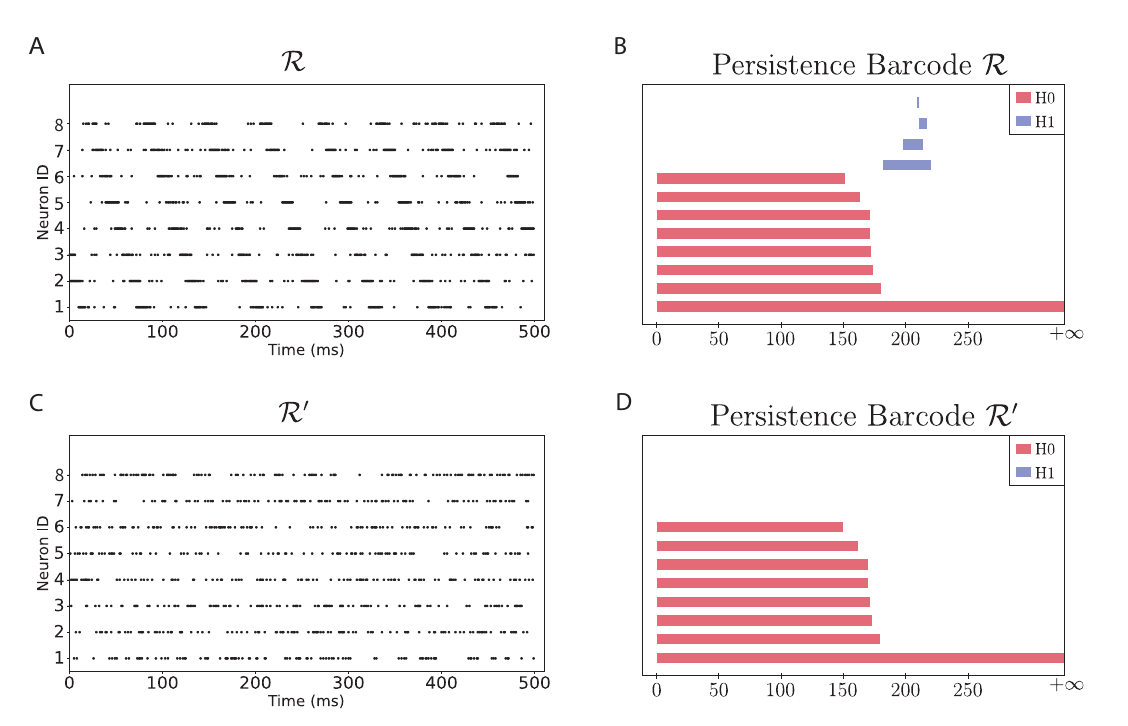}
    \caption{
    A pair of train ensembles, $\mathcal{R}$ (top) and $\mathcal{R}'$ (bottom), 
    that are similar in their $0$-dimensional structure yet differ in $1$-dimensional homology.
    \textbf{(A)} Raster plot of $\mathcal{R}$, exhibiting a repeating activation pattern across neurons.
    \textbf{(B)} Persistence barcode of $\mathcal{R}$, showing a clear $1$-dimensional feature.
    \textbf{(C)} Raster plot of $\mathcal{R}'$, lacking the recurrent pattern.
    \textbf{(D)} Persistence barcode of $\mathcal{R}'$, with no $1$-dimensional features.}
    \label{fig:h1_example}
\end{figure}

\subsection{Analysis of Neural Spike Trains From Behaving Mice}
In our next set of experiments, we investigate the capabilities of the persistent homology pipeline using experimental data recorded from behaving mice. 

\paragraph{Demonstration of VP Distance Stability} In Theorem~\ref{thm: VP stability Theorem} we showed that the classification score is stable to changes in the hyperparameter $q$ in the VP distance metric. We validate this here, using the biological data described above. In particular, we conducted a parameter sweep experiment across all gustatory cortex (GC) recordings. Each experiment corresponds to a unique mouse/day recording session, and thus to a distinct simultaneously recorded neural ensemble. In Fig.~\ref{fig:vp stability} we present the stability profile of all 38 such ensembles as the Victor-Purpura parameter $q$ is varied. For each ensemble, we applied our full persistent homology pipeline at 402 distinct $q$ values ranging from 0 to 2 in increments of $0.005$, providing a fine-grained empirical assessment of stability across parameter space. In this analysis, the score refers to the average LOOCV score computed from ten randomly selected trials per stimulus condition for each ensemble. To reduce stochastic variability we repeated the LOOCV classification for each $q$ value ten times with randomly selected trials and recorded the mean over all these trials. 

In Fig.~\ref{fig:vp stability} we display the stability landscape for all 38 experiments simultaneously over the entire parameter range $q \in [0,2]$. Here, distinct horizontal stretches of consistent coloring indicated ranges of $q$ over which the classification accuracy remains stable. While several ensembles show fluctuations for very small values of $q$, every ensemble exhibits a wide stable region for $q \geq 0.5$, providing empirical support for the robustness predicted by our theoretical results. We note, that the fluctuations for small $q$---as depicted in more detail in Panel A where we display an enlarged view of this parameter region---are not in contradiction to Theorem~\ref{thm: VP stability Theorem}. Indeed, the estimate presented in this result includes a (relatively large) Lipschitz constant $L=\frac{T(T+1)}{2}$. Thus the strong stability for larger $q$ rather suggests that the behavior in this parameter regime is better than the theoretical  guarantee. A further reason for the slight instability in the classification scores for small $q$ 
is that, at $q=0$, spike shifts can be made without incurring any penalty. Consequently, the VP distance reduces to a measure based solely on rate. When $q > 0$, however, shifting spikes does carry a cost, allowing the metric to capture temporal structure as well. Thus, this range of $q$ values represents a transition from a purely rate-based analysis to one that incorporates both rate and phase information.

\begin{figure}[htbp]
    \centering
    \includegraphics[width=1.0\linewidth]{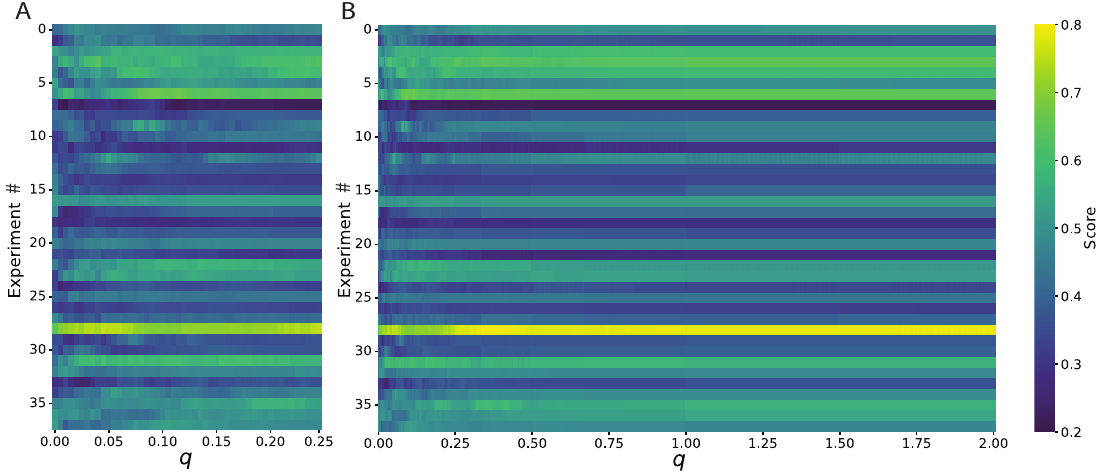}
    \caption{Stability across Experiments and hyperparameter values. Each row corresponds to a unique mouse/date combination (i.e., unique raster). Experiment labels are not the same as in figures below. \textbf{(A)} An analysis near $q=0$, where VP is primarily a measure of the spike rate, not timing. \textbf{(B)} On the wider scale of $q$ values each horizontal bar shows a consistent band color, indicating that the classification score is stable over that range of hyperparameter values. Each element corresponds to the average Leave-One-Out classification score for 401 $q$-values starting at 0 and ending at 2 in increments of 0.005. The experiments show some instability over very small $q$ values, but a large stable region as $q$ increases above 0.5. } 
    \label{fig:vp stability}
\end{figure}

\paragraph{Relationship of the Persistence Barcodes to the Biological Raster Plots}
In this section we demonstrate how the persistent homology pipeline clusters response patterns from different biological stimuli, and illustrate how persistence barcodes can be interpreted in terms of the raster plots generated from experimental recordings of neural spike timing. In Fig.~\ref{fig:mds_experiment_showcase}, we analyze an ensemble of six neurons exposed to three distinct stimuli, each presented ten times. Panel A displays the bottleneck distance matrix (BDM) obtained from the persistence diagrams of each trial, providing a pairwise measure of dissimilarity between the topological signatures of the spike train ensembles. 

To obtain a qualitative understanding of how spike train ensembles from different trials cluster together and how they separate across stimuli, we visualize the BDM using multidimensional scaling (MDS). MDS is a classical dimensionality reduction technique that embeds high-dimensional data into a lower-dimensional space while preserving pairwise distances as faithfully as possible~\cite{scikit-learn, Borg1997}. The BDM is shown in Fig.~\ref{fig:mds_experiment_showcase}A, and a 2-dimensional MDS representation based on the BDM is shown in Fig.~\ref{fig:mds_experiment_showcase}B,
with each point colored according to the stimulus. The resulting clusters demonstrate clear stimulus-specific organization.  

To illustrate how the persistence barcodes relate back to the raster plots, we consider two trials elicited by stimuli with high thermal contrast, deionized water are $14^{\circ}$C (the ``cool stimulus") and $36^{\circ}$C (``the warm stimulus"). Panels C and D show the persistence barcodes for the connected components derived from the VP distance on the spike trains. Note  that the number of bars is equal to the number of neurons in the ensemble. The $x$-axis is the distance threshold used in the filtration. At a threshold of 20 there are four bars present for the cool stimulus and three bars for the warm stimulus. This means that in the functional network graph there are four connected components for the cool stimulus and three for the warm stimulus, as shown in panels E and F, respectively. Most of the connected components are isolated nodes in the graphs, but the graph for each stimulus also contains a larger component. If nodes are connected together in the network graph it indicates that the spike patterns for those neurons are similar in spike rate and timing. 

The raster plots for the cool and warm stimuli are shown in panels G and H, respectively. In the raster plot in panel G it appears that neurons 3, 5, and 6 have similar low-frequency spiking, so the corresponding nodes form a cluster in the network graph. Neuron 4, the only high-frequency spiker, is clearly different from the rest and the corresponding node in the graph is isolated. Neurons 1 and 2 have some spike overlap, but also substantial difference in the spike timing, so the corresponding nodes in the network graph are not in the same cluster at this threshold level. Examining the raster for the warm stimulus (panel H), it is again evident that neuron 4 should not cluster with other neurons due to its much higher firing frequency, and indeed node 4 is an isolate in the network graph. Most of the other neurons have low firing frequency and some overlap in spike timing and are therefore clustered together in the network graph. The exception is neuron 1, which has low firing frequency, but the spike timing differs substantially from the other low-firing-frequency neurons. Thus, neuron 1 corresponds to an isolated node in the network graph.  

\begin{figure}[htbp]
    \centering
    \includegraphics[width=1.0\linewidth]{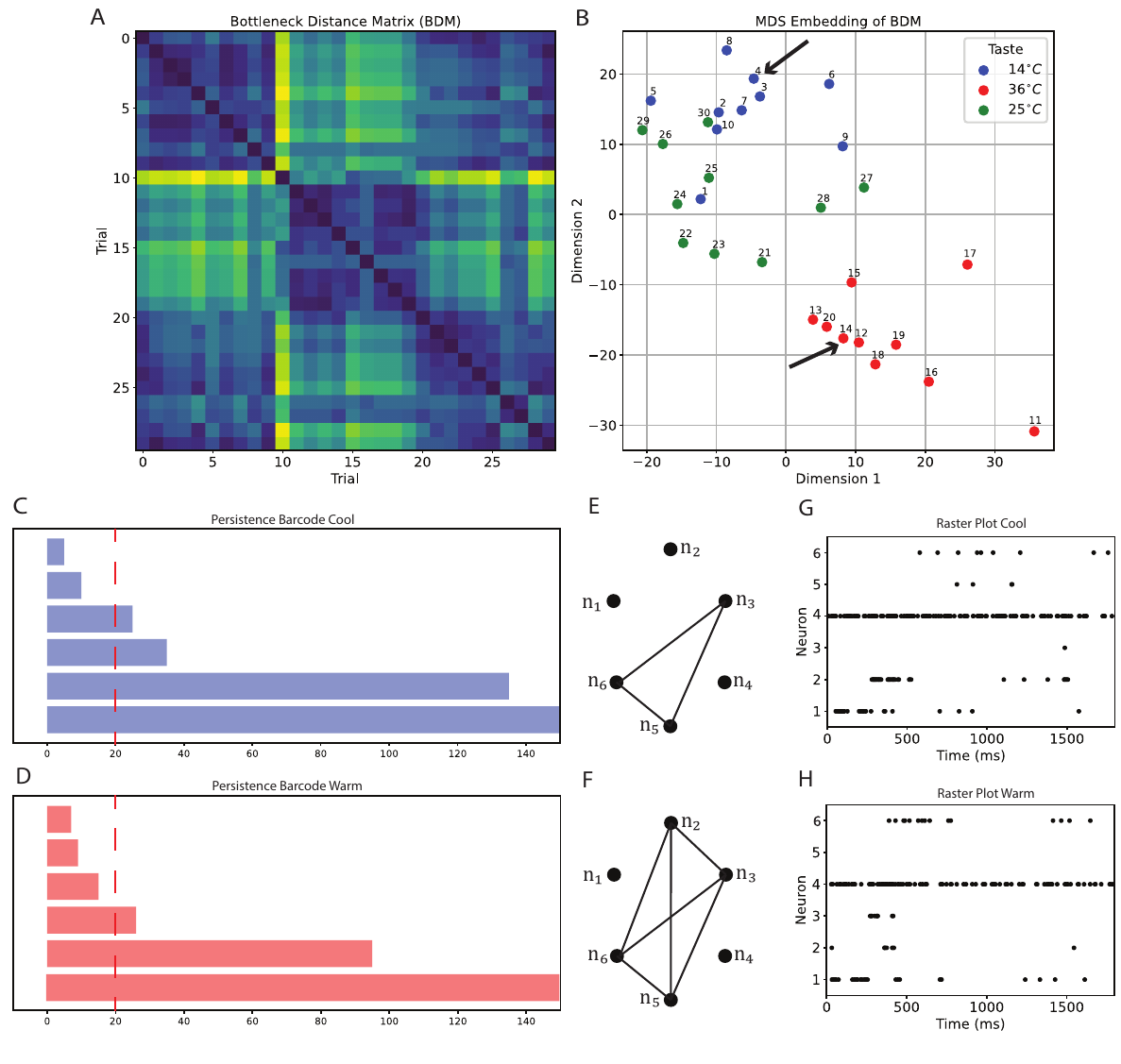}
    \caption{\textbf{(A)} Bottleneck distance matrix for an experiment in which a six neuron ensemble was exposed to three stimuli (water at $14^{\circ}$C, $36^{\circ}$C, and $25^{\circ}$C), each presented ten times. \textbf{(B)} Two-dimensional multidimensional scaling (MDS) embedding of the matrix in panel A. Arrows point to the two example trials highlighted in subsequent panels. \textbf{(C), (D)} Persistence barcodes of the connected components ($H_0$) for the two selected trials. The blue barcode corresponds to a cool stimulus, while the red corresponds to a warm stimulus. The dashed line indicates the threshold value used to form the functional network graphs. The distinct bar structures demonstrate that different stimuli evoke topologically distinguishable patterns of ensemble activity. \textbf{(E), (F)} Functional network graphs at $r=20$. The number of bars intersected by the dashed line corresponds to the number of connected components in the graph. \textbf{(G), (H)} Raster plots for the two selected trials. The spike rate and timing are the elements determining the connectivity in the functional network graphs.}
    \label{fig:mds_experiment_showcase}
\end{figure}

\paragraph{The Persistent Homology Pipeline Classification Typically Outperforms Single-Neuron Classification}

We assessed the ability of the persistent homology pipeline to distinguish response patterns to stimuli using spike train ensembles responding to deionized water at three different temperatures. Although 38 ensembles were recorded, only 33 were included in this analysis. Inclusion was restricted to ensembles in which at least one classification method exhibited performance significantly above random guessing, as determined by a one-sample t test ($p<0.05$). Each ensemble is referred to as an experiment. Network-level classification scores, utilizing 10 randomly selected experiments per stimuli, were compared to both random chance and the average performance of individual neurons in each network. For each experiment, 20 repetitions of the pipeline were employed to ensure variability across trial subsets.

Single neuron classification performance was obtained using the Bayesian rate–phase classifier described in~\cite{Nash2025-ot}, the details of which are included in Appendix~\ref{sec:gc_sna_appendix}. This analysis allowed us to assess the degree to which structured population activity, as captured by the persistent homology pipeline, enhanced stimulus discrimination beyond what was achieved from single-neuron analysis. In Fig.~\ref{fig:gctemps_session_by_session}, each experiment, representing a single session of simultaneously recorded neurons from one animal on one day, is enumerated on the x-axis, and the y-axis measures the classification score. For each experiment, the orange dot represents the mean score using the persistent homology pipeline with 20 repetitions. Each gray dot represents the single neuron classification score for each neuron in the network, and the average of these points is represented in black. 

Across the 33 experiments, 25 (~76\%) achieved higher classification accuracy with the persistent homology pipeline than with the average of all single neuron classification scores. The two methods showed a statistically significant difference in 20 of 33 experiments (two sample t-test, $p<0.05$). In these cases, the network classification score outperformed the single neuron average 14 of 20 times (70\%), whereas the mean single neuron accuracy was significantly greater than the network accuracy only six of 20 times (~30\%). Fig~\ref{fig:gctemps_session_by_session} summarizes the findings of all 33 experiments. Experiments are organized relative to a vertical divider to highlight the comparative behavior of the two approaches. To the left of the divider are the experiments in which the network classification score exceeded the mean single neuron accuracy, ordered from highest to lowest network classification score. Conversely, experiments plotted to the right of the vertical divider are those in which the average single-neuron classification outperformed the network approach. Similar to our findings in the synthetic data setting, these results demonstrate that population-level interactions captured by the persistent homology analysis can convey stimulus-related information not accessible from single-neuron responses alone.

\begin{figure}[htbp]
    \centering
    \includegraphics[width=1.0\linewidth]{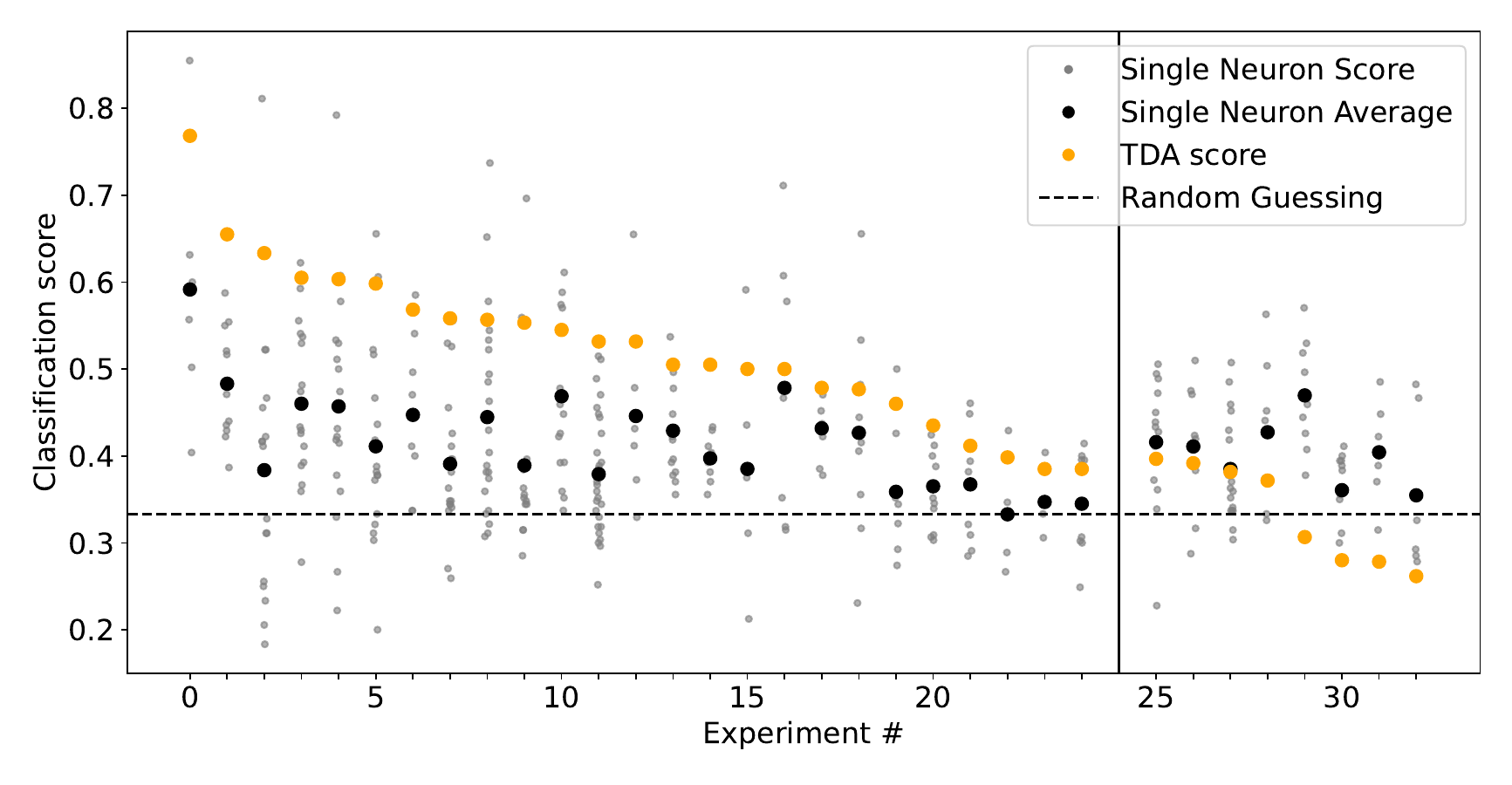}
    \caption{Classification scores using recordings from GC neurons computed using single-neuron scores and the persistent homology pipeline. Each experiment corresponds to a unique recording session. Orange points denote network classification scores captured using the persistent homology pipeline (with $q=2.005$, 10 trials, and 20 repetitions). Gray points represent scores of individual neurons in each ensemble when utilizing the Bayesian rate-phase classifier~\cite{Nash2025-ot} (with $\alpha=0.875$). Black points denote the mean of all gray scores (the average of all individual neuron scores per experiment). For all experiments left of the first vertical line, TDA classification scores outperformed the single neuron averages and were greater than random guessing (0.33). On the right of the vertical line, the single neuron average outperformed the TDA classification score.}
    \label{fig:gctemps_session_by_session}
\end{figure}

\section{Discussion}\label{sec:discussion}
While spike trains from individual neurons can be analyzed in isolation, this approach may overlook critical information encoded at the population level~\cite{buzsaki2004large,hebb1949organization,Yuste2015}. In this study, we described a new method for analyzing \emph{in-vivo} recordings from neuronal ensembles using multi-electrode arrays. Because simultaneously recorded neurons are often part of interconnected networks that process the same sensory information and contribute to coordinated behavioral responses, their collective activity patterns can reveal coding principles that single-neuron analyses cannot capture.

Our approach uses persistent homology, a topological data analysis (TDA) technique, on spike train ensembles, enabling the detection of structure in neuronal population activity that is not apparent at the level of individual neurons. We provided a formal representation of spike train ensembles as elements of a metric space endowed with the Victor-Purpura (VP) distance and established guarantees for the stability of both the distance and the resulting topological descriptors. Additionally, we demonstrated that this pipeline can successfully discriminate neural responses to sensory stimuli, thermal inputs in the gustatory cortex (GC) of behaving mice, which highlights the importance of utilizing population-level coding analysis in sensory processing.

Persistent homology captures nontrivial topological organization in neural spike train data such as clusters corresponding to distinct firing patterns for different stimuli. In both synthetic and experimental datasets, ensemble-level topological signatures enabled substantially higher stimulus classification accuracy compared to single-neuron analysis (Figs.~\ref{fig:synthetic_network},~\ref{fig:gctemps_session_by_session}). This finding suggests that there is sensory information in the structure of neural population activity that can be extracted using our persistent homology approach beyond what can be extracted from the analysis of the single-neuron spiking patterns. These results extend previous work showing that population geometry can encode stimulus features in sensory cortices~\cite{Singh2008TopologicalAnalysis, giusti2015clique, Guidolin2022-tn} and demonstrate that topological methods provide a principled away to quantify such geometry in a manner robust to noise and sampling variability. 

By treating ensembles of spike trains as elements of a metric space under the VP distance, we proved Lipschitz continuity with respect to the VP cost parameter. Additionally, we obtained a stability result concerning the probability measures that a network assigns to each stimulus, formalizing the intuition that train ensembles observed from networks responding to slightly different stimuli will yield similar topological descriptors. These results assure that small perturbations in hyperperameter choice or spike timing will not substantially alter the resulting topological summaries, which is essential for ensuring that TDA-derived features are reproducible across trials and datasets. 

Applying this framework to multi-electrode recordings from the mouse gustatory cortex revealed that response classification based on ensemble-level topological features typically outperformed single-neuron analyses in distinguishing stimuli at different temperatures. The method uncovered stimulus-specific structure within spike train ensembles, even when responses were distributed across heterogeneous neuronal populations with varying degrees of individual selectivity. By capturing these population-level dynamics, the TDA approach complements traditional statistical analyses, offering interpretable summaries of network organization and highlighting how certain ensembles exhibit coherent firing clusters in response to specific stimuli.

Advances in recording technology, such as chronically implanted high-density silicon probes \cite{jun2017fully}, allow hundreds of neurons to be monitored across multiple brain regions simultaneously in behaving animals. The persistent homology approach that we described here can be used on such data to take full advantage of the new technical developments. While we applied the pipeline to data from the gustatory cortex, it would be equally as applicable to spike train recordings from other brain regions. Temporal point process data also arise in fields outside of neuroscience, such as the timing of expression of an array of genes~\cite{Fromion13,Palande2023, Jethava2011} and the timing of transactions in financial markets~\cite{Engle03}. 

Despite these promising results, several limitations remain. The computational demands associated with persistent homology computations can limit scalability to very large datasets. Future work should explore efficient approximations to expand applicability to larger neural populations. Additionally, 
though this framework presented a result specific to the VP distance and relied on this distance for experimental results, the pipeline is general and can be utilized in conjunction with other distances (either on the space of spike trains or in other applicable settings
), such as the Van Rossum distance~\cite{Van_Rossum2001-ay}, SPIKE distance~\cite{Kreuz2007-vs}, RI-SPIKE~\cite{Satuvuori2018-nz}, a variation of Earth Mover's Distance~\cite{Sihn2019-ld}, and many others. Different distance metrics capture different aspects of neural similarity, so the suitability of the metric will depend on the questions being addressed. Although we computed some preliminary results using alternative metrics, future work that examines the impact of different metrics on the pipeline output would be a natural direction.

\appendix
\section{Further Reading}
\begin{enumerate}
    \item \cite{Centeno2022} provides a tutorial for TDA in neuroscience.
    \item \cite{curto2025topological} is a review article about TDA in neuroscience.
    \item \cite{doi:10.1073/pnas.0705546104} discusses evidence that sensory neurons function within a systems-level dynamic process best understood through distributed ensembles.
\end{enumerate}

\section{Statistical Methods For Single Neuron Analysis}

\subsection{Single Neuron Analysis Method for Fig.~\ref{fig:synthetic_network}}
\label{sec:synthetic_appendix}

For the data collected and presented in Fig.~\ref{fig:synthetic_network}C, scores for the individual neurons were calculated with a method comparable to that of the pipeline. We used the same synthetic dataset setup created for the network experiment for the individual neuron analysis. Each `neuron' responds to 2 stimuli for 100 trials, resulting in 200 trials of data for classification. We then calculated the VP distance pairwise between each trial, and used these to construct a VP distance matrix. This is analogous to the BDM in our persistent homology pipeline. 

To test the classification ability of each individual neuron, we performed leave-one-out cross-validation on the information from this VP distance matrix. Each spike train served once as a test set while the remainder of the dataset formed the training set. For the row corresponding to the test set, we found the element with the smallest value. This corresponds to the neuron with the most similar spiking pattern. The test set was then assigned the label of that most-similar element (either ``stimulus 1" or ``stimulus 2"). If the test label matches that used to create the spike train, then the classification index was incremented by 1. After iterating through all samples, so that each becomes the test element exactly once, the final classification score was computed as the number of correctly classified samples (i.e., the classification index)  divided by the total number of samples.

\subsection{Single Neuron Analysis Method for Fig.~\ref{fig:gctemps_session_by_session}}
\label{sec:gc_sna_appendix}
For the individual neuron scores in Fig.~\ref{fig:gctemps_session_by_session}, we utilized a method for understanding in-depth classification for individual coding neurons in the GC, first used in~\cite{Nash2025-ot}, where complete details of the method can be found. An abbreviated version of the method is included below. This method used information on the timing of licks, which improved the single neuron classification score~\cite{Nash2025-ot}. 

There are first some pre-processing and normalization steps, where the raw spike train recording is considered in conjunction with the lick timings. We set the initial time as the time of the first lick to the spout for which a stimulus was present. Time warping was then applied so that there was a uniform between licks (200 time units). For each trial, five lick intervals of data were examined - the first five in which stimulus was present, so that there are 1000 time units of data for analysis, which we will call a \textit{processed spike train vector}.

The trials were split into a training set (80\% of the trials) and a test set (20\% of the trials). For each stimulus, training set processed spike trains were used to construct a spike phase probability distribution. 
These were summed and divided by the number of elements in the training set to create a spike phase distributions for the training set.  A Gaussian kernel density estimator (with bandwith 5) was then used to convert the discrete distribution into a continuous probability density function for the \textit{phase} of spiking for this particular neuron/stimuli combination. To obtain a training set-based distribution for the spike rate, we used the values of the spike rate histogram from all trials in the training set and applied a gaussian kernel density estimator (with bandwidth 2), resulting in a continuous \textit{rate} probability density function. Using both of these functions, we employed a Bayesian classification scheme to obtain classification scores that quantify a neuron's ability to distinguish one stimulus from others. The test set trials were used to calculate classification scores.

Each element of the test set with $N$ spikes was interpreted as both $N$ i.i.d. samples from a phase distribution for one of the 3 stimuli ($\mu^P_i, i=1, 2, 3$) and a single sample $y$ from a rate distribution for one of the stimuli ($\mu^R_i, i=1, 2, 3$). To determine the most likely stimulus $\mathcal T_i$, we maximized the weighted sum of $P\big(\mathcal T_i|(x_1,\ldots,x_N)\big)$ and $P\big(\mathcal T_i|y\big)$ for $i=1, 2, 3$:
\begin{equation*}
   RP_i=0.875 P\big(\mathcal T_i|y\big) +(0.125) P\big(\mathcal T_i|(x_1,\ldots,x_N)\big)
   \end{equation*}
After utilizing Bayes' theorem, we obtain 
\begin{equation*}
   P\big(\mathcal T_i|(x_1,\ldots,x_N)\big)=\frac{P\big((x_1,\ldots,x_N)|\mathcal T_i\big)P\big(\mathcal T_i\big)}{\sum_{j=1}^{3}P\big((x_1,\ldots,x_N)|\mathcal T_j\big)P\big(\mathcal T_j\big)} \;\; 
\end{equation*} and
\begin{equation*}
   P\big(\mathcal T_i|y\big)=\frac{P\big(y|\mathcal T_i\big)P\big(\mathcal T_i\big)}{\sum_{j=1}^{3}P\big(y|\mathcal T_j\big)P\big(\mathcal T_j\big)}
\end{equation*}
where $P\big(\mathcal T_j\big)$ are the prior probabilities that the observation is from stimulus $\mathcal T_j$. Making use of the facts that all stimuli in the dataset are equally probable and we operate under the i.i.d assumption, we can simplify to \begin{equation*}
   P\big(\mathcal T_i|(x_1,\ldots,x_N)\big)=L^P_i/\sum_{j=1}^{3}L^P_j \;\;, \;\; P\big(\mathcal T_i|y\big)=L^R_i/\sum_{j=1}^{3}L^R_j
\end{equation*}
where $L^R_j$ and $L^P_j$ are the likelihood
\begin{equation*}
    L^R_j = \mu^R_j(y) \;\;,\;\; L^P_j = \mu^P_j(x_1)\ldots \mu^P_j(x_N) \;\;.
\end{equation*}
Using this method, we obtained a value $RP_i$ for each stimulus. The stimulus with the highest value was selected as the one most likely to have produced the test spike train sample. This process is repeated for each element of the testing set. The overall classification score, as reported in Fig.~\ref{fig:gctemps_session_by_session}, is the fraction of times this classification is correct, after this process was repeated 15 times for each neuron (each with a different random partition of the data into training and testing sets) and averaged.

\section{Supplemental Material}
The supplemental material consists of a single file, persistent\_barcode\_animations.html, an interactive browser-based viewer that visually demonstrates how filtered simplicial complexes and their associated persistence barcodes are constructed. This tool allows readers to select from several example datasets and observe, in real time, how simplices are added at increasing filtration values and how corresponding bars appear and persist in the barcode. Its inclusion is to provide an intuitive, accessible illustration of persistent homology, especially for readers who are new to the subject. 
\url{https://ayhncgty.github.io/Visual-TDA/docs/PH_demos/persistent_barcode_animations.html}

\section{Author Declarations}
\subsection{Authorship and Contributorship} C. Ayhan and A. N. Nash contributed equally. 
All authors have made substantial intellectual contributions to the study conception, methodology, formal analysis and investigation, writing, and design of the work. All authors have read and approved the final manuscript. In addition, the following contributions occurred: Funding acquisition: M. Bauer, R. Bertram, T. Needham, and R. Vincis; Supervision: M. Bauer, R. Bertram, T. Needham, and R. Vincis.

Cecilia G. Bouaichi and Katherine E. Odegaard are acknowledged for experimental data that was collected for a prior publication~\cite{bouaichi2020vincis}

ChatGPT (OpenAI GPT-5.1) assisted in formulating the optimization setup used to generate the synthetic train ensembles in Figure \ref{fig:h1_example}. All mathematical reasoning, implementation, analysis, and final decisions were made by the authors. 

\subsection{Conflicts of Interest}
The authors declare there are no conflicts of interest.

\subsection{Data \& Code Availability}
The data that supports the findings of this study is openly available in neuron-response-classification at
\newline
\url{https://github.com/vincisLab/neuron-response-classification}.
 
 The pipeline methodology, data analysis, and figure generation code that support the findings of this study are openly available in
 Persistent-Homology-Pipeline-for-Neural-Spike-Train-Data at 
\newline
\url{https://github.com/ayhncgty/Persistent-Homology-Pipeline-for-Neural-Spike-Train-Data}.

\subsection{Ethics}
This study involves the secondary analysis of animal data that were previously published and collected under ethical approval by the original investigators. No new animal experiments were conducted for this work. All data analyzed in this study were obtained from publicly available sources or published reports in which animal procedures were carried out in accordance with established institutional and national guidelines for the care and use of animals in research.

\subsection{Funding}
This research was partially supported by NSF grant DMS-2324962.

\newpage
\bibliographystyle{siam} 
\bibliography{References} 

\begin{thebibliography}{10}

\bibitem{Amzquita2020}
{\sc E.~J. Amézquita, M.~Y. Quigley, T.~Ophelders, E.~Munch, and D.~H.
  Chitwood}, {\em The shape of things to come: Topological data analysis and
  biology, from molecules to organisms}, Dev. Dyn., 249 (2020), p.~816–833.

\bibitem{Averbeck2006}
{\sc B.~B. Averbeck, P.~E. Latham, and A.~Pouget}, {\em Neural correlations,
  population coding and computation}, Nat. Rev. Neurosci., 7 (2006),
  pp.~358--366.

\bibitem{ayhan2025equivalence}
{\sc C.~Ayhan and T.~Needham}, {\em Equivalence of landscape and erosion
  distances for persistence diagrams}, arXiv preprint arXiv:2506.21488,
  (2025).

\bibitem{Bardin2018TopologicalExploration}
{\sc J.-B. Bardin, G.~Spreemann, and K.~Hess}, {\em Topological exploration of
  artificial neuronal network dynamics}, Netw. Neurosci., 3 (2019),
  pp.~725--743.

\bibitem{barth2012experimental}
{\sc A.~L. Barth and J.~F. Poulet}, {\em Experimental evidence for sparse
  firing in the neocortex}, Trends Neurosci., 35 (2012), pp.~345--355.

\bibitem{bauer2021ripser}
{\sc U.~Bauer}, {\em Ripser: efficient computation of vietoris--rips
  persistence barcodes}, J. Appl. Comput. Topol., 5 (2021), pp.~391--423.

\bibitem{Borg1997}
{\sc I.~Borg and P.~Groenen}, {\em Modern Multidimensional Scaling}, Springer
  New York, 1997.

\bibitem{Bouaichi2023-ol}
{\sc C.~G. Bouaichi, K.~E. Odegaard, C.~Neese, and R.~Vincis}, {\em Oral
  thermal processing in the gustatory cortex of awake mice}, Chem. Senses, 48
  (2023).

\bibitem{bouaichi2020vincis}
{\sc C.~G. Bouaichi and R.~Vincis}, {\em Cortical processing of chemosensory
  and hedonic features of taste in active licking mice}, J. Neurophysiol., 123
  (2020), pp.~1995--2009.
\newblock PMID: 32319839.

\bibitem{bubenik2015statistical}
{\sc P.~Bubenik et~al.}, {\em Statistical topological data analysis using
  persistence landscapes.}, J. Mach. Learn. Res., 16 (2015), pp.~77--102.

\bibitem{burago2001course}
{\sc D.~Burago, Y.~Burago, S.~Ivanov, et~al.}, {\em A course in metric
  geometry}, vol.~33, American Mathematical Society Providence, 2001.

\bibitem{buzsaki2004large}
{\sc G.~Buzsáki}, {\em Large-scale recording of neuronal ensembles}, Nat.
  Neurosci., 7 (2004), pp.~446--451.

\bibitem{Carlsson2009}
{\sc G.~Carlsson}, {\em Topology and data}, Bull. Am. Math. Soc., 46 (2009),
  pp.~255--308.

\bibitem{carlsson2014topological}
{\sc G.~Carlsson}, {\em Topological pattern recognition for point cloud data},
  Acta Numer., 23 (2014), pp.~289--368.

\bibitem{carlsson2007theory}
{\sc G.~Carlsson and A.~Zomorodian}, {\em The theory of multidimensional
  persistence}, Discrete Comput. Geom., 42 (2009), p.~71–93.

\bibitem{Centeno2022}
{\sc E.~G.~Z. Centeno, G.~Moreni, C.~Vriend, L.~Douw, and F.~A.~N. Santos},
  {\em A hands-on tutorial on network and topological neuroscience}, Brain
  Struct. Funct., 227 (2022), p.~741–762.

\bibitem{chazal2009gromov}
{\sc F.~Chazal and S.~Y. Oudot}, {\em Gromov–hausdorff stable signatures for
  shapes using persistence}, Comput. Graph. Forum., 28 (2009), pp.~1393--1403.

\bibitem{Chung2022HumanNeuropixels}
{\sc J.~E. Chung, D.~W. Sullivan, A.~P. Leonis, Y.~F. Hsu, K.~J. Miller,
  C.~Koch, D.~Huber, and et~al.}, {\em High-density single-unit human cortical
  recordings using neuropixels probes}, Neuron,  (2022).
\newblock intraoperative recordings; see full text.

\bibitem{Churchland2012Dynamics}
{\sc M.~M. Churchland, J.~P. Cunningham, M.~T. Kaufman, J.~D. Foster,
  P.~Nuyujukian, S.~I. Ryu, and K.~V. Shenoy}, {\em Neural population dynamics
  during reaching}, Nature, 487 (2012), pp.~51--56.

\bibitem{Cohen-Steiner2007}
{\sc D.~Cohen-Steiner, H.~Edelsbrunner, and J.~Harer}, {\em Stability of
  persistence diagrams}, Discrete Comput. Geom., 37 (2007), pp.~103--120.

\bibitem{cover1967nearest}
{\sc T.~Cover and P.~Hart}, {\em Nearest neighbor pattern classification}, IEEE
  Trans. Inf. Theory, 13 (1967), pp.~21--27.

\bibitem{curto2025topological}
{\sc C.~Curto and N.~Sanderson}, {\em Topological neuroscience: Linking
  circuits to function}, Annu. Rev. Neurosci., 48 (2025).

\bibitem{dabaghian2012topological}
{\sc Y.~Dabaghian, F.~Mémoli, L.~Frank, and G.~Carlsson}, {\em A topological
  paradigm for hippocampal spatial map formation using persistent homology},
  {PLoS} Comp. Biol., 8 (2012), p.~e1002581.

\bibitem{dayan2005theoretical}
{\sc P.~Dayan and L.~F. Abbott}, {\em Theoretical neuroscience: computational
  and mathematical modeling of neural systems}, MIT press, 2005.

\bibitem{elephant18}
{\sc M.~Denker, A.~Yegenoglu, and S.~Grün}, {\em {C}ollaborative {HPC}-enabled
  workflows on the {HBP} {C}ollaboratory using the {E}lephant framework}, in
  Neuroinformatics 2018, 2018, p.~P19.

\bibitem{EdelsbrunnerHarer2010}
{\sc H.~Edelsbrunner and J.~Harer}, {\em Computational Topology: An
  Introduction}, American Mathematical Society, Providence, RI, 2010.

\bibitem{edelsbrunner2002topological}
{\sc H.~Edelsbrunner, D.~Letscher, and A.~Zomorodian}, {\em Topological
  persistence and simplification}, Discrete Comput. Geom., 28 (2002),
  p.~511–533.

\bibitem{Engle03}
{\sc R.~F. Engle and L.~Asger}, {\em Trades and quotes: {A} bivariate point
  process}, J. Financ. Econ., 1 (2003), pp.~159--188.

\bibitem{Fromion13}
{\sc V.~Fromion, E.~Leoncini, and P.~Robert}, {\em Stochastic gene expression
  in cells: {A} point process approach}, SIAM J. Appl. Math., 73 (2013),
  p.~10.1137/120879592.

\bibitem{frosini2001size}
{\sc P.~Frosini and C.~Landi}, {\em Size functions and formal series}, Appl.
  Algebra Eng. Commun. Comput., 12 (2001), pp.~327--349.

\bibitem{ganmor2015thesaurus}
{\sc E.~Ganmor, R.~Segev, and E.~Schneidman}, {\em A thesaurus for a neural
  population code}, Elife, 4 (2015), p.~e06134.

\bibitem{gardner2022toroidal}
{\sc R.~J. Gardner, E.~Hermansen, M.~Pachitariu, Y.~Burak, N.~A. Baas, B.~A.
  Dunn, M.-B. Moser, and E.~I. Moser}, {\em Toroidal topology of population
  activity in grid cells}, Nature, 602 (2022), pp.~123--128.

\bibitem{giusti2015clique}
{\sc C.~Giusti, E.~Pastalkova, C.~Curto, and V.~Itskov}, {\em Clique topology
  reveals intrinsic geometric structure in neural correlations}, Proc. Nat.
  Acad. Sci., 112 (2015), pp.~13455--13460.

\bibitem{gudhi:BottleneckDistance}
{\sc F.~Godi}, {\em Bottleneck distance}, in GUDHI User and Reference Manual,
  GUDHI Editorial Board, 3.11.0~ed., 2025.

\bibitem{gomez2025metrics}
{\sc M.~G{\'o}mez, G.~Ma, T.~Needham, and B.~Wang}, {\em Metrics for parametric
  families of networks}, arXiv preprint arXiv:2509.22549,  (2025).

\bibitem{Guidolin2022-tn}
{\sc A.~Guidolin, M.~Desroches, J.~D. Victor, K.~P. Purpura, and S.~Rodrigues},
  {\em Geometry of spiking patterns in early visual cortex: a topological data
  analytic approach}, J. R. Soc. Interface, 19 (2022), p.~20220677.

\bibitem{hebb1949organization}
{\sc D.~O. Hebb}, {\em The Organization of Behavior: A Neuropsychological
  Theory}, Wiley, New York, 1949.

\bibitem{hromadka2008sparse}
{\sc T.~Hrom{\'a}dka, M.~R. DeWeese, and A.~M. Zador}, {\em Sparse
  representation of sounds in the unanesthetized auditory cortex}, {PLoS} Comp.
  Biol., 6 (2008), p.~e16.

\bibitem{Jethava2011}
{\sc V.~Jethava, C.~Bhattacharyya, D.~Dubhashi, and G.~N. Vemuri}, {\em Netgem:
  Network embedded temporal generative model for gene expression data}, BMC
  Bioinform., 12 (2011).

\bibitem{doi:10.1073/pnas.0705546104}
{\sc L.~M. Jones, A.~Fontanini, B.~F. Sadacca, P.~Miller, and D.~B. Katz}, {\em
  Natural stimuli evoke dynamic sequences of states in sensory cortical
  ensembles}, Proc. Nat. Acad. Sci., 104 (2007), pp.~18772--18777.

\bibitem{jun2017fully}
{\sc J.~J. Jun, N.~A. Steinmetz, J.~H. Siegle, D.~J. Denman, M.~Bauza,
  B.~Barbarits, A.~K. Lee, C.~A. Anastassiou, A.~Andrei, C.~Ayd{\i}n, et~al.},
  {\em Fully integrated silicon probes for high-density recording of neural
  activity}, Nature, 551 (2017), pp.~232--236.

\bibitem{kass2005statistical}
{\sc R.~E. Kass, V.~Ventura, and E.~N. Brown}, {\em Statistical issues in the
  analysis of neuronal data}, J. Neurophysiol., 94 (2005), pp.~8--25.

\bibitem{Kreuz2007-vs}
{\sc T.~Kreuz, J.~S. Haas, A.~Morelli, H.~D.~I. Abarbanel, and A.~Politi}, {\em
  Measuring spike train synchrony}, J. Neurosci. Meth., 165 (2007),
  pp.~151--161.

\bibitem{meyer2017models}
{\sc A.~F. Meyer, R.~S. Williamson, J.~F. Linden, and M.~Sahani}, {\em Models
  of neuronal stimulus-response functions: elaboration, estimation, and
  evaluation}, Front. Syst. Neurosci., 10 (2017), p.~109.

\bibitem{Nash2025-ot}
{\sc A.~N. Nash, M.~Shakeshaft, C.~G. Bouaichi, K.~E. Odegaard, T.~Needham,
  M.~Bauer, R.~Bertram, and R.~Vincis}, {\em Cortical coding of gustatory and
  thermal signals in active licking mice}, J. Neurophysiol., 603 (2025),
  pp.~909--928.

\bibitem{okun2015diverse}
{\sc M.~Okun, N.~A. Steinmetz, L.~Cossell, M.~F. Iacaruso, H.~Ko,
  P.~Barth{\'o}, T.~Moore, S.~B. Hofer, T.~D. Mrsic-Flogel, M.~Carandini,
  et~al.}, {\em Diverse coupling of neurons to populations in sensory cortex},
  Nature, 521 (2015), pp.~511--515.

\bibitem{Palande2023}
{\sc S.~Palande, J.~A.~M. Kaste, M.~D. Roberts, K.~Segura~Abá, C.~Claucherty,
  J.~Dacon, R.~Doko, T.~B. Jayakody, H.~R. Jeffery, N.~Kelly, A.~Manousidaki,
  H.~M. Parks, E.~M. Roggenkamp, A.~M. Schumacher, J.~Yang, S.~Percival,
  J.~Pardo, A.~Y. Husbands, A.~Krishnan, B.~L. Montgomery, E.~Munch, A.~M.
  Thompson, A.~Rougon-Cardoso, D.~H. Chitwood, and R.~VanBuren}, {\em
  Topological data analysis reveals a core gene expression backbone that
  defines form and function across flowering plants}, {PLoS} Comp. Biol., 21
  (2023), p.~e3002397.

\bibitem{paninski2007statistical}
{\sc L.~Paninski, J.~Pillow, and J.~Lewi}, {\em Statistical models for neural
  encoding, decoding, and optimal stimulus design}, Prog. Brain Res., 165
  (2007), pp.~493--507.

\bibitem{Panzeri2015}
{\sc S.~Panzeri, J.~H. Macke, J.~Gross, and C.~Kayser}, {\em Neural population
  coding: combining insights from microscopic and mass signals}, Trends Cogn.
  Sci., 19 (2015), pp.~162--172.

\bibitem{patel2018generalized}
{\sc A.~Patel}, {\em Generalized persistence diagrams}, J. Appl. Comput.
  Topol., 1 (2018), pp.~397--419.

\bibitem{scikit-learn}
{\sc F.~Pedregosa, G.~Varoquaux, A.~Gramfort, V.~Michel, B.~Thirion, O.~Grisel,
  M.~Blondel, P.~Prettenhofer, R.~Weiss, V.~Dubourg, J.~Vanderplas, A.~Passos,
  D.~Cournapeau, M.~Brucher, M.~Perrot, and E.~Duchesnay}, {\em Scikit-learn:
  Machine learning in {P}ython}, J. Mach. Learn. Res., 12 (2011),
  pp.~2825--2830.

\bibitem{gudhi:urm}
{\sc T.~G. Project}, {\em GUDHI User and Reference Manual}, GUDHI Editorial
  Board, 3.11.0~ed., 2025.

\bibitem{Satuvuori2018-nz}
{\sc E.~Satuvuori and T.~Kreuz}, {\em Which spike train distance is most
  suitable for distinguishing rate and temporal coding?}, J. Neurosci. Meth.,
  299 (2018), pp.~22--33.

\bibitem{scikittda}
{\sc N.~Saul and C.~Tralie}, {\em Scikit-tda: Topological data analysis for
  python}, 2019.

\bibitem{Sihn2019-ld}
{\sc D.~Sihn and S.-P. Kim}, {\em A spike train distance robust to firing rate
  changes based on the earth mover's distance}, Front. Comput. Neurosci., 13
  (2019), p.~82.

\bibitem{Singh2008TopologicalAnalysis}
{\sc G.~Singh, F.~M{\'e}moli, and G.~Carlsson}, {\em Topological analysis of
  population activity in visual cortex}, J. Vis., 8 (2008), pp.~1--18.

\bibitem{spector2005representation}
{\sc A.~C. Spector and S.~P. Travers}, {\em The representation of taste quality
  in the mammalian nervous system}, Behav. Cogn. Neurosci. Rev., 4 (2005),
  pp.~143--191.

\bibitem{stone1974cross}
{\sc M.~Stone}, {\em Cross-validatory choice and assessment of statistical
  predictions}, J. R. Stat. Soc. Series B Stat. Methodol., 36 (1974),
  pp.~111--133.

\bibitem{Stringer2019HighDimensional}
{\sc C.~Stringer, M.~Pachitariu, N.~Steinmetz, C.~B. Reddy, M.~Carandini, and
  K.~D. Harris}, {\em High-dimensional geometry of population responses in
  visual cortex}, Nature, 571 (2019), pp.~361--365.

\bibitem{ctralie2018ripser}
{\sc C.~Tralie, N.~Saul, and R.~Bar-On}, {\em {Ripser.py}: A lean persistent
  homology library for python}, J. Open Source Softw., 3 (2018), p.~925.

\bibitem{Trautmann2025NeuropixelsNHP}
{\sc E.~M. Trautmann et~al.}, {\em Large-scale high-density brain-wide neural
  recording in nonhuman primates with neuropixels nhp probes}, Nat. Neurosci.,
  (2025).

\bibitem{truccolo2005point}
{\sc W.~Truccolo, U.~T. Eden, M.~R. Fellows, J.~P. Donoghue, and E.~N. Brown},
  {\em A point process framework for relating neural spiking activity to
  spiking history, neural ensemble, and extrinsic covariate effects}, J.
  Neurophysiol., 93 (2005), pp.~1074--1089.

\bibitem{Van_Rossum2001-ay}
{\sc M.~C. van Rossum}, {\em A novel spike distance}, Neural Comput., 13
  (2001), pp.~751--763.

\bibitem{victor2005spike}
{\sc J.~D. Victor}, {\em Spike train metrics}, Curr. Opin. Neurobiol., 15
  (2005), pp.~585--592.

\bibitem{VictorPurpura1996}
{\sc J.~D. Victor and K.~P. Purpura}, {\em Nature and precision of temporal
  coding in visual cortex: a metric-space analysis}, J. Neurophysiol., 76
  (1996), pp.~1310--1326.

\bibitem{villani2008optimal}
{\sc C.~Villani et~al.}, {\em Optimal transport: old and new}, vol.~338,
  Springer, 2008.

\bibitem{vincis2019central}
{\sc R.~Vincis and A.~Fontanini}, {\em Central taste anatomy and physiology},
  Handb. Clin. Neurol, 164 (2019), pp.~187--204.

\bibitem{Yuste2015}
{\sc R.~Yuste}, {\em From the neuron doctrine to neural networks}, Nat. Rev.
  Neurosci., 16 (2015), pp.~487--497.

\bibitem{yuste2024neuronal}
{\sc R.~Yuste, R.~Cossart, and E.~Yaksi}, {\em Neuronal ensembles: Building
  blocks of neural circuits}, Neuron, 112 (2024), pp.~875--892.

\bibitem{ZomorodianCarlsson2005}
{\sc A.~Zomorodian and G.~Carlsson}, {\em Computing persistent homology},
  Discrete Comput. Geom., 33 (2005), pp.~249--274.

\end{thebibliography}
\end{document}